\newcommand{\R}{\ensuremath{\mathbb{R}}}
\newcommand{\Disc}[1]{\ensuremath{\mathcal{D}}\left(#1\right)}
\newcommand{\ie}{\textit{i.e.}}
\newcommand{\ang}{\text{ang}}
\newtheorem{theorem}{Theorem}
\newtheorem{lemma}{Lemma}
\begin{document}

\title{Probabilistic Asynchronous Arbitrary Pattern Formation\thanks{This work was performed within the Labex SMART supported by French state funds managed by the ANR within the Investissements d'Avenir programme under reference ANR-11-IDEX-0004-02.}}

\author{Quentin Bramas$^1$ \and Sébastien Tixeuil$^1$}

\date{}


\maketitle
\begin{center}
    $^1$Sorbonne Universités, UPMC Univ Paris 06, CNRS, LIP6 UMR 7606, 4 place Jussieu 75005 Paris.\\
    \texttt{\{quentin.bramas,\,sebastien.tixeuil\}@lip6.fr}
\end{center}

\begin{abstract}
We propose a new probabilistic pattern formation algorithm for oblivious mobile robots that operates in the ASYNC model. Unlike previous work, our algorithm makes no assumptions about the local coordinate systems of robots (the robots do \emph{not} share a common ``North'' nor a common ``Right''), yet it preserves the ability from any initial configuration that contains at least $5$ robots to form any general pattern (and not just patterns that satisfy symmetricity predicates). Our proposal also gets rid of the previous assumption (in the same model) that robots do not pause while moving (so, our robots really are fully asynchronous), and the amount of randomness is kept low -- a single random bit per robot per Look-Compute-Move cycle is used. Our protocol consists in the combination of two phases, a probabilistic leader election phase, and a deterministic pattern formation one. As the deterministic phase does not use chirality, it may be of independent interest in the deterministic context. A noteworthy feature of our algorithm is the ability to form patterns with multiplicity points (except the gathering case due to impossibility results), a new feature in the context of pattern formation that we believe is an important asset of our approach.\\

\end{abstract}


\section{Introduction}

We consider a set of mobile robots that move freely in a continuous 2-dimensional Euclidian space. 
Each robot repeats a Look-Compute-Move (LCM) cycle~\cite{Suzuki99}. First, it \emph{Looks} at its surroundings to obtain a snapshot containing the locations of all robots as points in the plane, with respect to its ego-centered coordinate system. Based on this visual information, the robot \emph{Computes} a destination and then \emph{Moves} towards the destination.
The robots are identical, anonymous and oblivious \ie, the computed destination in each cycle depends only on the snapshot obtained in the current
cycle (and not on the past history of execution). The snapshots obtained by the robots are not consistently oriented in any manner.

The literature defines three different models of execution: in the fully synchronous (FSYNC) model, robots execute LCM cycles in a lock-step manner, in the semi-syn\-chronous (SSYNC) model, each LCM cycle is supposed atomic, 
and in the most general asynchronous (ASYNC) model, each phase of each LCM cycle may take an arbitrary amount of time. This last model enables the possibility that a robot observes another robot while the latter is moving (and moving robots appear in the snapshot \emph{exactly the same way static robots do}), and that move actions are based on obsolete observations.

In this particularly weak model it is interesting to characterize which additional assumptions are necessary and sufficient for the robots to cooperatively perform a given task. In this paper, we consider the pattern formation problem in the most general ASYNC model. The robots start in an arbitrary initial configuration where no two robots occupy the same position, and are given the pattern to be formed as a set of coordinates in their own local coordinate system. An algorithm solves the pattern formation problem if within finite time the robots form the input pattern and remain stationary thereafter.

\noindent\textbf{Related Works.}
The pattern formation problem has been extensively studied in the deterministic setting~\cite{bouzid2011pattern,BouzidDPT10,Yamashita2010,Yamashita2012,Shantanu2010,YamashitaS10,Yamauchi13,DieudonnePV10,Flocchini08,Suzuki99}. 
The seminal paper on mobile robots~\cite{Suzuki99} presents a deterministic solution to construct general patterns in the SSYNC model, with the added assumption that robots have access to an infinite non-volatile memory (that is, robots are \emph{not} oblivious). The construction was later refined for the ASYNC model by Bouzid \emph{et al.}~\cite{bouzid2011pattern}, still using a finite number of infinite precision variables. 

The search for an oblivious solution to the general pattern formation proved difficult~\cite{Flocchini08}. For oblivious deterministic robots to be able to construct any general pattern, it is required that they agree on a common ``North'' (that is, a common direction and orientation) but also on a common ``Right'' (that is, a common chirality), so that robots get to all agree on a common coordinate system. If only a ``North'' (and implicitly if only a ``Right'') is available, then some patterns involving an even number of robots cannot be formed. 
Relaxing the common coordinate system condition let to a characterization of the patterns that can be formed by deterministic oblivious robots~\cite{Yamashita2010,Yamashita2012,Yamauchi13}. The best deterministic algorithm so far in the ASYNC model without a common coordinate system~\cite{Yamashita2012} proves the following: If $\rho$ denotes the geometric symmetricity of a robot configuration (i.e., the maximum integer $\rho$ such that the rotation by $2\pi/\rho$ is invariant for the configuration), and $I$ and $P$ denote the initial and target configurations, respectively, then $P$ can be formed if and only if $\rho(I)$ divides $\rho(F)$. All aforementioned deterministic solutions assume that both the input configurations and the target configuration do not have multiplicity points (that is, locations hosting more than one robot), and that robots share a common chirality. Overall, oblivious deterministic algorithms either need a common coordinate system or cannot form any general pattern. 

To circumvent those impossibility results, the probabilistic path was taken by Yamauchi and Yamashita~\cite{Yamashita2014}. The robots are oblivious, operate in the most general ASYNC model, and can form any general pattern from any general initial configuration (with at least $n\geq5$ robots), without assuming a common coordinate system. However, their approach~\cite{Yamashita2014} makes use of three hypotheses that are not proved to be necessary: \emph{(i)} all robots share a common chirality, \emph{(ii)} a robot may not make an arbitrary long pause while moving (more precisely, it cannot be observed twice at the same position by the same robot in two different Look-Compute-Move cycles while it is moving), and \emph{(iii)} infinitely many random bits are required (a robot requests a point chosen uniformly at random in a continuous segment) anytime access to a random source is performed. While the latter two are of more theoretical interest, the first one is intriguing, as a common chirality was also used extensively in the deterministic case. The following natural open question raises: is a common chirality a necessary requirement for mobile robot \emph{general} pattern formation ? As the answer is yes in the deterministic~\cite{Flocchini08} case, we concentrate on the probabilistic case.

\noindent\textbf{Our contribution.}
In this paper, we propose a new probabilistic pattern formation algorithm for oblivious mobile robots that operate in the ASYNC model. Unlike previous work, our algorithm makes no assumptions about the local coordinate systems of robots (they do \emph{not} share a common ``North'' nor a common ``Right''), yet it preserves the ability from any initial configuration that contains at least $5$ robots to form any general pattern (and not just patterns such that $\rho(I)$ divides $\rho(F)$). Besides relieving the chirality assumption, our proposal also gets rid of the previous assumption~\cite{Yamashita2014} that robots do not pause while moving (so, they really are fully asynchronous), and the amount of randomness is kept low -- a single random bit per robot is used per use of the random source -- (\emph{vs.} infinitely many previously~\cite{Yamashita2014}). Our protocol consists in the combination of several phases, including a deterministic pattern formation one. As the deterministic phase does not use chirality, it may be of independent interest in the deterministic context. 

A noteworthy property of our algorithm is that it permits to form patterns with multiplicity points (\emph{without} assuming robots are endowed with multiplicity detection), a new feature in the context of pattern formation that we believe is an important asset of our approach. Of course, the case of gathering (a special pattern defined by a unique point of multiplicity $n$) remains impossible to solve in our settings~\cite{prencipe2007impossibility}.

%

\section{Model}
Robots operate in a 2-dimensional Euclidian space. Each robot has its own local coordinate system. For simplicity, we assume the existence (unknown from the robots) of a global coordinate system. Whenever it is clear from the context, we manipulate points in this global coordinate system, but each robot only sees the points in its local system. Two set of points $A$ and $B$ are \emph{similar}, denoted $A\approx B$, if $B$ can be obtained from $A$ by translation, scaling, rotation, or symmetry. A \emph{configuration} $P$ is a set of positions of robots at a given time. Each robot that looks at this configuration may see different (but similar) set of points. 

Each time a robot is activated it starts a Look/Compute/Move cycle. After the \emph{look} phase, a robot obtains a configuration $P$ representing the positions of the robots in its local coordinate system. After an arbitrary delay, the robot \emph{computes} a path to a destination. Then, it \emph{moves} toward the destination following the previously computed path. The duration of the move phase, and the delay between two phases, are chosen by an adversary and can be arbitrary long. The adversary decides when robots are activated assuming a \emph{fair} scheduling \ie, in any configuration, all robots are activated within finite time. The adversary also controls the robots movement along their target path and can stop a robot before reaching its destination, but not before traveling at least a distance $\delta > 0$ ($\delta$ being unknown to the robots).

An execution of an algorithm is an infinite sequence $P(0), P(1), \ldots$ of configurations. 
An algorithm $\psi$ forms a pattern $F$ if, for any execution $P(0), P(1), \ldots$, there exists a time $t$ such that $P(t)\approx F$ and $P(t')=P(t)$ for all $t'\geq t$. In the sequel, the set of points $F$ denotes the pattern to form. The coordinates of the points in $F$ are given to the robots in an arbitrary coordinate system so that each robot may receive different, but equivalent, pattern $F$. If the pattern contains points of multiplicity, the robots receives a multiset, that is, a set where each element  is associated with its multiplicity. Even if the robots are not endowed with multiplicity detection, they know from the pattern what are the points of multiplicity to form. In particular, then can deduce from the pattern, the number $n$ of robots, even if they do not see $n$ robots.

\section{Algorithm Overview}
Our algorithm is divided into four phases. Since robots are oblivious and the scheduling is asynchronous, we cannot explicitly concatenate several phases to be executed in a specific order. However, one can simulate the effect of concatenation of two (or more) phases by inferring from the current configuration which phase to execute. Implementing this technique is feasible if phases are associated with disjoint sets of configurations where they are executed. Also, in order to simplify the proof of correctness, robots should not switch phases when placed in a configuration containing moving robots \emph{i.e.}, a phase has to ensure that if the configuration resulting from a movement is associated with another phase, then all the robots are static (that is, none of them is moving). When this property holds, the first time a phase is executed, we can suppose that the configuration is static. 


Our algorithm can form an arbitrary pattern. In particular, the pattern $F$ can contain points of multiplicity (but cannot be a single point). If this is the case, the robots create a new pattern $\tilde{F}$ from $F$ where they remove the multiplicity, and add around each point $p$ of multiplicity $m$, $m-1$ points really close to $p$, and located at the same distance to the center of the smallest circle enclosing $F$. The algorithm then proceed as usual with $\tilde{F}$ instead of $F$. The initial pattern $F$ is formed by the termination phase, when $\tilde{F}$ is almost formed. So from now, we suppose that the pattern does not contains points of multiplicity and we refer to the details of the termination phase to see how an arbitrary pattern is formed.

In the following we define the phases of our algorithm and the set of associated configurations, starting from the more precise one (the phase we intuitively execute at the end to complete the pattern formation). Unless otherwise stated, the \emph{center} of a configuration refers to the center of the smallest enclosing circle of this configuration.
\vspace{0.2cm}

\noindent\textbf{Termination.} The termination phase occurs when all robots, except the closest to the center, forms the target pattern (from which we remove one of the point closest to the center). The phase consists in moving the last robot towards its destination. While moving, the robot remains the closest to the center, so that the resulting configuration is associated to the same phase.
\vspace{0.2cm}

\noindent\textbf{Almost Pattern Formation.} Among the remaining configurations, we associate the guided ones to this phase. A configuration is \emph{guided} when a unique robot is sufficiently close to the center and induces by its position a global sense of direction and orientation to every robot. In particular, when executing this phase, robots are totally ordered and have a unique destination assigned. The phase consists first in moving all the robots (except the one that is closest to the center), one by one, so that they are at the same distance to the center as their destination in the pattern. Secondly, the robots moves toward their destination, keeping their distance to the center and the ordering unchanged. The configuration has to remain guided until each robot, except the closest to the center, reaches its destination. A configuration obtained after executing this phase is either associated to the same phase, or to the termination phase.

\noindent\textbf{Formation of a Guided Configuration 1 (FBC1).} Among the remaining configurations, we associate the ones that contain a \emph{centered equiangular or biangular} (CEB) set to this phase. A CEB-set is a subset of robots that exists when the configuration is symmetric or is almost symmetric. Moreover it is constructed independently from the coordinate system (so it is unique when it exists), and is invariant when the robots in this set move toward (or away from) the center of the configuration (see Section~\ref{phase2} for a formal definition). When the configuration contains a CEB-set, our algorithm consists in moving the robots in this set to obtain a guided configuration. The invariance property of the CEB-set is important to ensure that resulting configurations are still associated with this phase.

In more details, when this phase is executed, the robots in the CEB-set $Q$ moves either toward or away from the center with probability $1/2$. We show that, with probability $1$, a unique robot is elected after a finite number of activations. Then, the elected robot performs a special move to force the other robots in $Q$ to terminate their movement. Once each robot is static, the elected robot moves toward the center to create a guided configuration. During the execution of this phase, it is possible that the configuration is associated with the termination phase. If this happens, our algorithm makes sure that all robots are static.

\noindent\textbf{Formation of a Guided Configuration 2 (FBC2).} We associate all remaining configurations to this phase. When executing this phase, the configuration does not have a CEB-set. This implies that the configuration is not symmetric, so the robots are totally ordered. Therefore, the smallest robot moves toward the center to create a guided configuration (it remains the smallest robot while doing so). Before the movement, the robot checks if there exists a point in its path that creates a configuration containing a CEB-set. If it is the case, the robot chooses this point as its destination so that, when the configuration contains a CEB-set (and the robots switch to the FBC1 phase), all the robots are static. 

\newcommand\phaseTransitionSize\scriptsize

\tikzstyle{every node}=[]
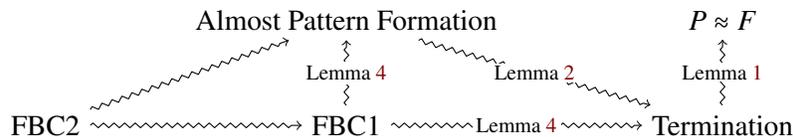
\begin{figure}[h]\centering
    \begin{tikzpicture}
    \node (DPF) at (0,0) {Almost Pattern Formation};
    \node [below of=DPF, node distance=1.4cm] (RSB-Q)  {FBC1};
    \node [right of=RSB-Q, node distance=5cm] (Term) {Termination};
    \node [left of=RSB-Q, node distance=4cm] (RSB-Qb)  {FBC2};
    \node [above of=Term, node distance=1.4cm] (P-F)  {$P\approx F$};
    \draw [->,
        line join=round,
        decorate, decoration={
            zigzag,
            segment length=4,
            amplitude=.9,post=lineto,
            post length=2pt
        }]  (RSB-Q) -- (DPF) node [midway, fill=white,inner sep=0.05cm] {\phaseTransitionSize Lemma \ref{lem:phase FCB1 terminates}};
    \draw [->,
        line join=round,
        decorate, decoration={
            zigzag,
            segment length=4,
            amplitude=.9,post=lineto,
            post length=2pt
        }]  (RSB-Qb) -- (RSB-Q) node [midway] {}
        ;
    \draw [->,
        line join=round,
        decorate, decoration={
            zigzag,
            segment length=4,
            amplitude=.9,post=lineto,
            post length=2pt
        }]  (RSB-Qb) -- (DPF) node [midway] {}
        ;
    \draw [->,
        line join=round,
        decorate, decoration={
            zigzag,
            segment length=4,
            amplitude=.9,post=lineto,
            post length=2pt
        }]  (RSB-Q) -- (Term) node [midway, fill=white,inner sep=0.05cm] {\phaseTransitionSize Lemma \ref{lem:phase FCB1 terminates}};
    \draw [->,
        line join=round,
        decorate, decoration={
            zigzag,
            segment length=4,
            amplitude=.9,post=lineto,
            post length=2pt
        }]  (DPF) -- (Term) node [midway, fill=white,inner sep=0.03cm] {\phaseTransitionSize Lemma \ref{lem:phase 3 terminates and C(P) is unchanged}};
        \draw [->,
        line join=round,
        decorate, decoration={
            zigzag,
            segment length=4,
            amplitude=.9,post=lineto,
            post length=2pt
        }]  (Term) -- (P-F) node [midway, fill=white,inner sep=0.03cm] {\phaseTransitionSize Lemma \ref{lem:Termination terminates}};
\end{tikzpicture}
\caption{Relations between the phases of the algorithm.}\label{fig: relation between phases}
\end{figure}
We define the relation $\rightsquigarrow$ between phases, where $A\rightsquigarrow B$ if executing the phase $A$ can lead to a configuration associated with phase $B$. 
The proof of our main theorem follows from Lemma~\ref{lem:Termination terminates}, Lemma~\ref{lem:phase 3 terminates and C(P) is unchanged}, and Lemma~\ref{lem:phase FCB1 terminates} 
(see Figure~\ref{fig: relation between phases}) detailed in the next Section.

\begin{theorem}
    Our algorithm forms any pattern $F$ that is not a point, starting from any configuration of at least 5 robots that does not contain a point of multiplicity.
\end{theorem}

\section{Algorithm Details}
In this section we describe in more detail the phases of our algorithm. 
We start by listing the necessary notations used in the remaining of the section.

\subsection{Notations}

Let $P$ be a set of points, then $C(P)$ denotes the smallest enclosing circle of $P$. Otherwise mentioned, $c(P)$ denotes the center of $C(P)$. The circle of a robot $r\in P$ is the circle centered at $c(P)$ containing $r$. We say a robot moves on its circle if its trajectory is contained in its circle. A radial movement is a linear movement whose origin and destination are on the same half-line of origin $c(P)$. We say a robot moves radially if it performs a radial movement.

The pattern to form, $F$, is given to each robot as a set of points in their local coordinate system. However, at each activation, robots can translate and scale their local coordinate system so that $C(P)=C(F)$. Hence, we suppose in the remainder of the paper that $C(P)=C(F)$, and that the radius of $C(P)$ is the common unit distance (unless otherwise mentioned). This is possible because in our case, the configuration where all robots share the same location (that is, are gathered) is not reachable. For two points $a$ and $b$, $|a|_b = |a-b|$ denotes the distance between $a$ and $b$. In a $n$-robot configuration $P$, as we are often interested in the distance between a point and the center $c(P)$, we simply write $|a|$ instead of $|a|_{c(P)}$. 

The interior, resp. the exterior, of a disc or a circle $C$, denoted $interior(C)$, resp. $exterior(C)$, does not include the circumference. A set of points $A$ (or simply a point) \emph{holds} $C(P)$ if $C(P\setminus B) \neq C(P)$, for a subset $B\subset A$.
The angle formed by three points $u$, $v$ and $w$ is denoted by $ang(u,v,w)\in [0,2\pi)$, and the orientation depends on the context. If the orientation is not given, it is either clockwise or counterclockwise, but it does not vary for a given robot during a cycle.

\vspace{0.2cm}

\noindent\textbf{Partial Ordering and Symmetricity.}
Given a set of points (typically a configuration) $P$, we order the points based on their coordinates in the coordinate systems defined by the points that are the closest to, but not at, the center $c(P)$. Formally, let $M=\{r\in P\text{ \emph{s.t.} } |r|=\min_{r'\in P\setminus\{c(P)\}}|r'|\}$. For each robot $r_m$ in $M$ and each orientation $o$ in $\{\lcirclearrowright, \rcirclearrowleft\}$ we define the polar coordinate system $Z_{r_m}^o: P\rightarrow \R^+\times[0,2\pi)$, $r\mapsto (|r|, ang(r_m, c(P), r))$ oriented by $o$. 
We denote by $Z_{r_m}^o(P)$ the increasing sequence of coordinates in $Z_{r_m}^o$ of the points in $P$. In particular, if $r_1$ and $r_2$ are two points such that $|r_1| < |r_2|$, then the coordinates in $Z_{r_m}^o(P)$ of $r_1$ are smaller than the coordinates of $r_2$. 

We define $Z_{\min}$ as the set of coordinate systems that minimize the sequence of coordinates, using the lexicographical order. We use this set of coordinate systems to define the relation $<$, where $r < r'$ if and only if the coordinates of $r$ are smaller than the coordinates of $r'$ in every coordinate systems $Z \in Z_{\min}$. From this relation we deduce the \emph{partial ordering of $P$}. 

We define the \emph{symmetricity} of a configuration $P$ as the number of minimal points in its partial ordering. In particular, if the robots are endowed with chirality (or if the configuration does not have an axis of symmetry), this definition matches the definition of symmetricity of previous work~\cite{BouzidDPT10,Yamashita2012}. In particular, if a configuration $P$ is such that $\rho(P) = k > 1$, then $P$ can be partitioned in $n/k$ regular $k$-gons centered at $c(P)$ (where a $2$-gon is a line with center its middle). However, this is not true in the general case when robots do not have a common sense of chirality. Also, it is important to notice that if $c(P) \in P$ then $\rho(P) = 1$, even if the configuration is symmetric or is invariant by rotation.
\vspace{0.2cm}

\noindent\textbf{Ordered and Guided Configuration.}
An \emph{ordered} configuration is a configuration where the partial order of robots is a \emph{total} order. In particular, this implies that there is a unique coordinate system $Z$ in $Z_{\min}$ and all the robots agree on $Z$ as a global coordinate system.

When the configuration is ordered, let $f_1, f_2, \ldots, f_n$ be any total ordering of points in $F$ satisfying the partial ordering of $F$ (an arbitrary ordering can be chosen if more than one satisfies the condition). Even if the pattern is given to robots using an arbitrary coordinate system, each robot can scale it so that $C(F) = C(P)$, mirrors it so that the orientation chosen for the ordering of the points in $F$ coincides with the orientation of the ordering of the robots, and rotate it so that the points $f_2$ and $r_2$ are on the same half-line of origin $c(P)$. Without loss of generality, we can suppose that $F$ is given to the robots with those property, in the global coordinate system $Z$.

One can observe that the choice of the ordering of points in $F$ is not important, since the resulting coordinates of points in $F$ in the global coordinate system $Z$ are identical for two different orderings (indeed, the resulting sets are equivalent, and are equals after applying the aforementioned transformations). So, from now on, when the configuration is ordered, robots see the points in $F$ in the same way in the global coordinate system, and have a common ordering of points in $F$.

A \emph{guided configuration} is an ordered configuration that satisfies: 
    \emph{(i)} $|r_1| = |r_{2}|/2$; 
    \emph{(ii)} $|r_{2}| \leq |f_{2}|$;
    \emph{(iii)} $2ang(r_1,c(P), r_{2}) < \min_{f\neq f_1,\,|f|=|f_2|} ang(f_2, c(F), f)$ (see Figure~\ref{fig:example of a guided configuration}).

A static configuration that only satisfies $|r_1| \leq \min(|r_2|, |f_2|)/2$ can be easily converted to a static guided configuration. To do so, $r_2$ moves toward the center until $|r_{2}| \leq |f_{2}|$ and if the third condition is not satisfied, $r_1$ moves to the center and then moves away from it to form an appropriate angle with $r_2$, and such that $|r_1| = \min(|r_2|, |f_2|)/2$. Once $r_1$ and $r_2$ reach their destinations, the configuration is static and guided. Therefore, from now on, a static configuration such that $|r_1| \leq \min(|r_2|, |f_2|)/2$ is considered to be guided.

\subsection{Termination}\label{phase4}

The termination phase consists in the following steps. First, the robots check (a) if a point of multiplicity exists in the configuration (by comparing the number of visible robots with the number of points in the pattern) or (b) if the configuration is guided (so that each robot sees $F$ in the same coordinate system), if each point in $F$ is occupied by a robot, and if the other robots are close to a point of $F$.\\
\emph{If it is not the case}, then the pattern to form is modified to obtain a pattern $\tilde{F}$ without point of multiplicity and the other phases of the algorithm are executed with $\tilde{F}$ instead of $F$.\\
\emph{If it is the case}, then every robot $r$ knows its destination $f_r\in F$ (its the closest point of $F$). If there is a robot $r\neq r_1$ with $r\neq f_r$ and $f_r\neq c(F)$, then $r$ moves toward $f_r$, while remaining in its circle. When no such robot exists then there is two case:\\
\emph{Case 1:} $r_1$ is the only robot not located on its destination in $F$. Then $r_1$ moves toward its destination (if multiple destinations are possible, then it chooses the closest one) and the other robots do not move. While it is moving, the global coordinate system is modified but $r_1$ remains the only closest robots to the center. When $r_1$ reaches its destination (which can be a location that is already occupied by another robot), the pattern $F$ is formed.\\
\emph{Case 2:} There are several robots with destination $c(F)$. Then, they all move toward $c(F)$ when activated. During those moves, the global coordinate system may not be visible to the robots anymore, but all the robots can detect that $F\setminus\{c(F)\}$ is formed and that the robots closest to $c(F)$ are moving toward $c(F)$.

Now we detail how $\tilde{F}$ is constructed from $F$. The goal is that, when $\tilde{F}$ is formed by the robots, every point of $F$ is occupied by a robot and the other robots are close to a point of $F$ (so that they know their destination in $F$ even without the ordering of robot). Let $F$ be the initial pattern that can contain points of multiplicity. Let $\tilde{F}$ be the set of points constructed from $F$ by removing the multiplicity and adding, for each point $p\in F\setminus\{c(F)\}$ of multiplicity $m>1$, $m-1$ points $p_1, \ldots, p_{m-1}$ such that $|p_i| = |p|$ and $ang(p, c(F), p_i) = \frac{\alpha(F)}{4i}$, with $\alpha(F)=\min \left(\{ ang(f,c(F),f') \;|\; f,\,f'\in F\} \setminus \{0\}\right)$. If $c(F)$ has multiplicity $m$ in $F$ then, we also add $m-1$ points evenly distributed on the circle of radius $\varepsilon_F = \frac{1}{4}\min_{f\in F\setminus\{c(F)\}} |f|$ such that one point is on the half line of origin $c(F)$ passing through one of the greatest point in $F$. 
In this construction, the orientation of the angles is either deduced from the pattern, or arbitrary if $F$ has an axis of symmetry. The construction may not be unique (e.g. if the $F$ has an axis of symmetry) but the possible resulting sets are all similar, and $\tilde{F}$ is any of those. 
For all the phases except Termination, $\tilde{F}$ is used instead of $F$ whenever $F$ contains points of multiplicity.

\begin{lemma}\label{lem:Termination terminates}
    The \emph{Termination} phase terminates on a configuration $P\approx F$.
\end{lemma}

\subsection{Almost Pattern Formation}\label{phase3}
In this phase we assume that the configuration is guided. We have a total order over robots $r_1, \ldots,r_n$, and each robot sees the pattern in the same way in a global coordinate system $Z$. In $Z$, the points of $F$ are also totally ordered $f_1,\ldots,f_n$ so that each robot $r_i$ knows its final destination $f_i$. The goal of this phase is that each robot, except $r_1$, reaches its destination. In the sequel $P'=P\setminus\{r_1\}$ and $F'=F\setminus\{f_1\}$. One can observe that $C(P') = C(P)$, and we can assume that $C(F') = C(F)$ (if this is not the case, we can modify the ordering of points in $F$ so that $f_1$ does not hold $C(F)$).

This phase consist in two sub-phases. The first one consists in moving each robot to its correct circle and the second one in moving each robot to its destination, while remaining in its circle. While doing this, the ordering of robot stays the same so that the destination of each robot remain unchanged. We give here a detailed overview, and refer to the appendix for a complete description.
 
\noindent\textbf{Reach the Correct Circle.}
Let $C_1$, $C_2$, $\ldots$, $C_m$ be the $m$ circles centered at $c(P)$ with decreasing radius, each containing at least one point in $F'$. This subphase consists in moving robots so that each circle contains the correct number of robots. For each circle $C_i$, $1\leq i\leq m$, we remove a robot if there are too many robots on $C_i$ and we add a robot if there are too few robots on $C_i$. We do so by either moving a robot that is on the circle toward the inside of $C_i$ or the contrary. We move the robots one by one while keeping the ordering of robot and by keeping $C(P)$ unchanged.

\noindent\textbf{Reach the Destination.}
Let $i\in[1,m]$, $C_i$ now contains $m_i$ robots and the $m_i$ destinations for those robots. The robots and the destinations are ordered so that each robot is aware of its corresponding destination. They can all move toward their destination, while remaining on $C_i$ and preserving the robots ordering (\ie, without reaching another robot position). When a robot $r$ is active and another robot is on the way, $r$ chooses on the circle half the distance to this robot. There cannot be a deadlock since there is no cycle in the waiting relation. Indeed, robots on $C_i$ are ordered by angle so that they behave like they are on a finite segment. 
If $i=1$, during their movement, robots also ensure that $C(P)$ remains unchanged. To do so, if a robot $r\in C_1$ is active and detects that its movement can modify $C(P)$, then it moves as much as possible without changing $C(P)$.

\begin{lemma}\label{lem:phase 3 terminates and C(P) is unchanged}
    After the \emph{Almost Pattern Formation} phase, each robot in $P'$ reaches its destination in $F'$, and the \emph{Termination} phase is executed.
\end{lemma}

\begin{proof}
First we show that there is no deadlock. Suppose we have on a circle $C$, $m$ robots $r_1 < r_2 < \ldots < r_m$ and $m$ destinations $d_1 < d_2 < \ldots <  d_m$. $r_i$ has destination $d_i$ and moves toward it (staying on $C$) in the direct orientation if $r_i < d_i$ and in the indirect orientation otherwise. For the sake of contradiction, suppose that $r_i < d_i$ and $r_i$ cannot reach $d_i$, even after an infinite number of activation. we observe that, to block $r_i$, $r_{i+1}$ must satisfy $r_i< r_{i+1} \leq d_1$ and $r_{i+1}$ is not able to free the way for $r_{i+1}$. This implies that $r_{i+1}\leq r_{i+3}\leq d_1$. Recursively, this means that $r_m \leq d_1$. But nothing blocks $r_m$ to reach $d_m$ when $r_m\leq d_m$, a contradiction.

Now suppose for the purpose of contradiction that $C(P)$ is modified. This means that there exist two robots $r$ and $r'$ on $C_1$ that form an angle greater than $\pi$. 
Before $C(P)$ is modified, they form an angle of at most $\pi$, so that one robot's movement on $C_1$ in the direct orientation, and the other's movement on $C_1$ in the indirect orientation. This is possible only if there is no point in $F$ on $C_1$ between $r$ and $r'$, which is a contradiction with the fact that $C(P)=C(F)$.
\end{proof}


\subsection{Formation of a Guided Configuration 1}\label{phase2}

If the current configuration $P$ is not guided, and contains a CEB-set $Q$, we execute this phase to obtain a guided configuration. To do so, one robot has to be elected to guide the configuration. Once a unique robot is elected, other robots may still be moving. One way to be sure that the other robots are static is to give them new destinations. To do so, the elected robot moves on its circle with a small angle. After this move, the other robots can still detect the elected robot and compute the angle. This angle can be used as a persistent memory. For a given angle, each robot moves toward a deterministic destination. This ensure that at the end of the phase all the robots are static. Another angle is used at the end of the phase to form a guided configuration.

The phase consists of two procedures: a robot election and using the elected robot to form a static guided configuration.

First, we give the formal definition of the CEB-set and how to compute it. Then, we present the two procedures that use the property of the CEB-set to form a static guided configuration.

\vspace{0.2cm}

\noindent\textbf{CEB-set Definition.}
For a given point $c$, the \emph{string of angles} $SA_{c, r, o}(P)$ starting from a robot $r$ with orientation $o$ is the sequence of angles formed by the robots in $P$ with $r$, around Point $c$, and with Orientation $o$. If for two robots $r\neq r'$, the strings of angles $SA_{c, r, o}(P)$ and $SA_{c, r', o}(P)$ are equal, we say the configuration is \emph{regular} (this is tantamount to say the string of angles is periodic). Regular sets have been introduced by Bouzid et al.~\cite{BouzidDPT10} for solving the gathering problem. Its main property is that there exists at most one point $c$ such that $SA_{c, r, o}(P)=SA_{c, r', o}(P)$, and this point is invariant by robots movement toward to, or away from $c$ (as it is also the Weber point). Particular regular configurations are equiangular configurations (the period of the string of angles is $1$) and biangular configurations (the period of the string of angles is $2$). The point $c$ is called the center of regularity of the configuration.

We say that a configuration is \emph{sym-regular}, if two strings of angles, centered at the center $c(P)$ of $C(P)$, with opposite orientations, are equal, \emph{i.e.}, if there exists $r,r'\in P$ (possibly $r=r'$) such that $SA_{c(P), r, \lcirclearrowright}(P)=SA_{c(P), r', \rcirclearrowleft}(P)$. 

We define the \emph{centered equiangular of biangular set} (CEB) set $CEB(P)$ of a configuration $P$ as follow: \\
\noindent $\bullet$ {If $P$ is not regular nor sym-regular:} then $P$ does not have a CEB-set \ie, $CEB(P) = \emptyset$.
\\
\noindent $\bullet$ {If the whole configuration $P$ is equiangular or biangular:} then $CEB(P) = P$ and, \emph{in this case, the center of the configuration $c(P)$ is the center of regularity}. \\
\noindent $\bullet$ {Otherwise:} the CEB-set $Q$ is constructed as follow. Initially, $Q=\emptyset$.
Let $S$ be the set of smallest robots (according to the partial ordering of robots) such that $Q\cup S$ does not hold $C(P)$. If $Q\cup S$ is equiangular or biangular then add the robots of $S$ in $Q$ and start again, otherwise stop and $CEB(P) = Q$.
Informally, the CEB set is the biggest subset of $P$ containing smallest robots that is equiangular or biangular with center $c(P)$ and that does not hold $C(P)$. Since $n\geq 5$, if all robots are on $SEC(P)$, there exist some robots that can be moved without changing $SEC(P)$, so that $Q\neq \emptyset$.


\begin{algorithm}[h]\small
    \SetAlgoRefName{constructCEBSet}
    \caption{}
    \label{algo:constructCEBSet}
    $Q = \emptyset$\quad $Ignore\leftarrow\emptyset$\\
    \While{$P\setminus (Ignore\cup Q)\neq\emptyset$}{
    Let $S$ be the set of smallest robots in $P\setminus (Ignore\cup Q)$ according to the partial ordering of robot\\
    \textbf{if} $Q\cup S$ holds $P$ \textbf{then} $Ignore\leftarrow Ignore\cup S$\\
    \textbf{else if }$Q\cup S$ is equiangular or biangular \textbf{then} $Q\leftarrow Q\cup S$\\
    \textbf{else} stop
    }
\end{algorithm}


\begin{theorem}\label{thm:sym conf have a CEB}
    For a configuration $P$, if $\rho(P)>1$, then $CEB(P)\neq\emptyset$.
\end{theorem}
If by moving one robot, a configuration has a CEB-set and the robot is among the closest robots to the center of the CEB-set, we say that the configuration has a CEB-set with a \emph{shifted robot}. The shifted robot is the robot that we need to move to create the CEB-set. The configuration is seen as if the shifted robot is at its right position (\ie, the position where it has to be for the configuration to contain a CEB-set). The difference between the angles of this robot before and after the move (with origin $c(P)$) is called the \emph{shift angle}. 
\begin{theorem}\label{thm:uniqueness of the regular set}
    Let $n\geq5$, and $P$ be an $n$-robot configuration that contains a CEB-set with a shifted robot. Let $\theta(P)$ be the smallest (non-null) angle, centered at $c(P)$, between two robots in the configuration.
    If the shift angle is at most $\theta(P)/2$, then the shifted robot is unique.
\end{theorem}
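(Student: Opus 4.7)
The argument I would pursue is a proof by contradiction: assume there exist two distinct shifted robots $r_1, r_2 \in P$ with respective shift angles $\alpha_1, \alpha_2 \leq \theta(P)/2$. For $i=1,2$, let $P_i$ denote the configuration obtained from $P$ by moving $r_i$ to its right position $r_i^*$, and let $Q_i$ be the CEB-set of $P_i$ (so $r_i^* \in Q_i$ by definition of shifted robot). The first step is to observe that the smallest enclosing circle is unchanged by each shift: since a shifted robot lies among the closest to the center of its CEB-set, it does not hold $C(P)$, so $c(P_1)=c(P_2)=c(P)$. Both $Q_1$ and $Q_2$ are therefore equiangular or biangular around the common center $c(P)$. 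Next, the bound $\alpha_i \leq \theta(P)/2$ prevents any shift from crossing another robot's angular coordinate, so the angular ordering of robots around $c(P)$ is the same in $P$, $P_1$, and $P_2$; in particular, each $q \in P \setminus \{r_1,r_2\}$ occupies an identical position in all three.

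The heart of the proof is to compare the two regular structures. Let $\Phi_i$ denote the set of angles of robots in $Q_i$ around $c(P)$. Because the shifts preserve distances to the center, both CEB-sets are determined by the same ``inner'' robots, so $Q_1$ and $Q_2$ consist of the same underlying robots and $\Phi_1, \Phi_2$ agree on every angular position not occupied by $r_1$ or $r_2$. I would then show, by a case analysis on whether each $Q_i$ is equiangular or biangular, that sharing $|Q_i|-2$ angular positions forces $\Phi_1 = \Phi_2$: two equiangular patterns with the same period that agree on that many angles must coincide (the period is determined up to a rotation that is a multiple of $2\pi/|Q_i|$, and any non-trivial rotation destroys the overlap); two biangular patterns agreeing on that many angles must share both ``period'' parameters; and an equiangular-biangular pair cannot share that many angles at all. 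The hypothesis $n \geq 5$ is essential here to guarantee $|Q_i| \geq 3$ so the equi/biangular type is determined by a few coordinates, and to rule out small-$n$ coincidences.

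Once $\Phi_1 = \Phi_2$, the right angle $\phi = r_1^*$ of $r_1$ in $Q_1$ lies in $\Phi_2$; but in $P_2$ the robot $r_1$ is not shifted, so its actual angle $\phi - \alpha_1$ must also belong to $\Phi_2 = \Phi_1$. This produces two distinct angles of $\Phi_1$ separated by $\alpha_1$. However, the minimum angular gap inside $\Phi_1$ is, by equi/biangularity and the $n \geq 5$ bound combined with $\theta(P_1) \geq \theta(P) - \alpha_1$, strictly larger than $\alpha_1$ — intuitively because the average gap in $Q_1$ is $2\pi/|Q_1| \geq 2\pi/n \geq \theta(P) > \alpha_1$. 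This contradicts $\alpha_1 > 0$.

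The main obstacle will be the case analysis establishing $\Phi_1 = \Phi_2$: handling the mixed equiangular/biangular case and controlling the situation where $Q_i$ is a strict subset of $P_i$ both require checking that the $n-2$ shared positions really do pin down the full regular structure. A secondary subtlety is the tightness of the gap estimate in the final step, which forces one to distinguish whether $r_2$ lies in $Q_1$ (so that the two angles actually sit inside $\Phi_1$); if not, one repeats the symmetric argument starting from $\Phi_2$ to still derive a contradiction.
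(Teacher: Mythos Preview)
Your proposal has a genuine gap at its very first step. You claim that ``the smallest enclosing circle is unchanged by each shift \ldots\ so $c(P_1)=c(P_2)=c(P)$'' and that ``both $Q_1$ and $Q_2$ are therefore equiangular or biangular around the common center $c(P)$.'' This is precisely where the difficulty of the theorem lies, and your argument overlooks it. By the paper's definition of CEB-set, when the \emph{entire} configuration $P_i$ is equiangular or biangular we have $CEB(P_i)=P_i$, and in that case the relevant center is \emph{not} the center of the smallest enclosing circle but the center of regularity (the Weber point for a biangular set). That center \emph{does} move when a single robot is shifted, so there is no common center around which to compare $\Phi_1$ and $\Phi_2$; your entire angular-coordinate comparison collapses in this case.

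This ``whole configuration is the CEB-set'' case is the heart of the matter, and the paper devotes most of its effort to it: it proves a chain of lemmas bounding how much the Weber point can move under a single small angular shift (showing, in particular, that no other robot's angle around the Weber point changes by more than the shift angle, so the cyclic ordering is preserved), and then uses a geometric argument about intersections of circular arcs $\{p:\ang(a,p,b)=\alpha\}$ to show that three well-chosen unmoved robots pin down the center of (bi)angularity uniquely, with a separate case analysis for $n=6$ biangular. Your angle-set comparison $\Phi_1=\Phi_2$ is only meaningful once the two centers are known to coincide, and establishing \emph{that} is the nontrivial content. In the complementary case where the CEB-set is a strict subset, the center is indeed the SEC center and invariant, and there the paper disposes of uniqueness in one line via the convention that the shift must decrease the minimum angle with the other robots --- a convention your argument does not invoke either.
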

The proof of this result is not trivial, and can be found in the appendix.
When computing the CEB-set of a configuration, all the robots can see if the CEB-set has a shifted robot. The existence of a shifted robot is crucial for our algorithm because the shift angle can be used as a persistent memory. In more details, if all the other robots move radially, they do not change the shift angle, so that the shifted robot can modify the shift angle to remember some information for the next activation. Moreover, all the other robots can see the shift angle and deduce some information from it.
In our algorithm the shift angle is used to order the robots to make a specific move, and then, another shift angle is used to stop them.
\vspace{0.2cm}

\noindent\textbf{Robot Election.} Let $Q$ bet the CEB-set. 
 We say a robot $r_e$ is \emph{elected} if it is the shifted robot of $Q$ or if $|r_e| < \frac{7}{8}\min_{r\in Q\setminus\{r_e\}}|r|$.
To elect a robot, each robot $r$ in $Q$ proceeds in the following way. If there is another robot in $Q$ that is strictly closer to the center, then $r$ does not move. 
If $r$ is not elected and is one of the closest robot (unique or not), then $r$ chooses randomly (each choice with probability $1/2$) to go toward or away from the center $c(P)$. 
If $r$ chooses to move toward the center, it moves a distance $|r|/8$. 
If $r$ chooses to move away to the center, it moves a (possible null) distance $\min\left(\frac{1}{2}\left(d-\left|r\right|\right), \frac{1}{7}|r|\right)$, where $d$ is the minimum distance to the center among robots in $P\setminus Q$ (and $d=\infty$ if $P\setminus Q=\emptyset$). This ensures that robots in $Q$ remain in the CEB-set in the resulting configuration.
A robot is aware that it is elected if it is elected during its look phase.
When a robot is aware it is elected, and is not yet the shifted robot, it moves on its circle to create a shift angle of $ \theta/8$. After the robot election, some robots may still be moving.
\begin{lemma}\label{lem:robot election} The following properties hold: \emph{(i)} eventually one robot is aware it is elected with probability one, and \emph{(ii)} once a robot is aware it is elected, another robot cannot be elected.
\end{lemma}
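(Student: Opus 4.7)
The plan is to handle the two parts separately: for (i), a round-based positive-probability argument combined with Borel--Cantelli; for (ii), an invariant that prevents any robot other than $r_e$ from satisfying the election definition. In both parts, I would crucially use the invariance of the CEB-set $Q$ and of the center $c(P)$ under the radial motions allowed in this phase, which ensures that $Q$, the partial ordering, and the distance thresholds stay well-defined throughout.

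For part (i), I would first exploit the rule that a robot in $Q$ does not move whenever another robot in $Q$ is strictly closer to $c(P)$: this restricts the random walk to the set $S\subseteq Q$ of robots currently at the minimum distance in $Q$. If $|S|\geq 2$ all members of $S$ lie at a common distance $d$, so the first activation of some $r\in S$ that draws ``toward'' (probability $1/2$) brings $|r|$ to $7d/8$ and makes $r$ the unique strictly closest robot of $Q$; this reduces the analysis to $|S|=1$. With $|S|=1$ and second-closest distance $d'$, the rule freezes every other robot of $Q$, while the unique closest robot $r^{*}$ keeps flipping; every ``toward'' flip multiplies $|r^{*}|$ by at most $7/8$, so after the second consecutive ``toward'' flip starting from $|r^{*}|=d$ one has $|r^{*}|\leq 49d/64<(7/8)d'$, triggering the distance clause of the election definition. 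This exhibits a constant lower bound $p>0$ on the probability of election over a bounded number of activations of $S$; applying Borel--Cantelli together with the fair scheduler then yields election with probability one.

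For part (ii), suppose $r_{e}$ is first elected at time $t_{0}$, so that either $r_{e}$ is the shifted robot of $Q$ or $|r_{e}(t_{0})|<\frac{7}{8}\min_{r\in Q\setminus\{r_{e}\}}|r(t_{0})|$. After $t_{0}$, the elected-robot branch of the protocol moves $r_{e}$ only along its own circle to create the prescribed shift angle, so $|r_{e}|$ remains constant. On its next look every other $r\in Q$ observes $r_{e}$ as strictly closer and therefore issues no new radial move. It remains to check that moves already in progress at $t_{0}$ cannot push a second robot into the elected status: such a pending move replaces $|r|$ by a value in $[(7/8)|r|,(8/7)|r|]$, and starting from $|r|>(8/7)|r_{e}|$ the new distance $|r'|$ still satisfies $|r'|>(7/8)(8/7)|r_{e}|=|r_{e}|$, so $|r_e|<(7/8)|r'|$ is preserved for every $r\neq r_e$ in $Q$; in particular no other robot can attain a distance strictly less than $(7/8)|r_{e}|$, ruling out the distance clause. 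For the shifted-robot clause, the shift angle created by $r_{e}$ is by construction at most $\theta(P)/2$, so Theorem~\ref{thm:uniqueness of the regular set} guarantees that the shifted robot of $Q$, when it exists, is unique and must therefore be $r_{e}$.

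The step I expect to be most delicate is the probabilistic analysis of (i) in full ASYNC rather than SSYNC: I have to account for robots observed mid-trajectory, for moves stopped by the adversary after only distance $\delta$, and for computations that operate on snapshots that have become stale. The cleanest route seems to be to define a round as a window in which every robot of $Q$ completes an entire Look--Compute--Move cycle, to exploit the fact that a partial ``toward'' move still shrinks $|r|$ by at least $\delta$ (so the ``toward'' direction gives deterministic progress regardless of early stops), and to use the invariance of $c(P)$ and $Q$ under the radial motions in order to track $S$ and the election threshold consistently across the round.
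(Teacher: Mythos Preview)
Your overall strategy matches the paper's: a positive-probability-per-attempt argument for (i), and for (ii) a bound on how far a pending radial move can carry another robot. The handling of (i) is essentially the same; the ASYNC refinement you flag at the end is exactly what the paper does, replacing ``two toward flips'' by ``$\lceil d/\delta\rceil$ consecutive toward flips'' to absorb adversarial early stops.

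There is, however, a genuine error in your part (ii). You derive $|r'|>(7/8)\cdot(8/7)|r_e|=|r_e|$ and then write ``so $|r_e|<\tfrac{7}{8}|r'|$ is preserved''. That does not follow: from $|r'|>|r_e|$ you get only $|r_e|<|r'|$, not $|r_e|<\tfrac{7}{8}|r'|$. The paper explicitly notes the opposite of what you claim: after a pending toward-move of some $r$ completes, $r_e$ \emph{may no longer} satisfy the distance clause of election. What actually makes (ii) work is twofold. First, the weaker inequality $|r'|>|r_e|$ is already enough to rule out the distance clause for every $r\neq r_e$, since that clause would require $|r|<\tfrac{7}{8}|r_e|<|r_e|$. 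Second, and this is the point you underplay, the safety of $r_e$ itself does not come from the distance clause at all: because $r_e$ was \emph{aware} of its election, it has already issued its move on its circle, so after travelling any positive distance it is the shifted robot of $Q$; Theorem~\ref{thm:uniqueness of the regular set} then gives uniqueness. A smaller slip in the same computation: the interval $[(7/8)|r|,(8/7)|r|]$ bounding a pending move refers to $|r|$ at $r$'s own Look time, not at $t_0$; you conflate the two. The desired conclusion $|r'|>|r_e|$ still holds, but via $|r'|\ge\tfrac{7}{8}|r(t_{\mathrm{look}})|\ge\tfrac{7}{8}|r(t_0)|>|r_e|$ for a toward-move, and trivially for an away-move.
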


Also, our algorithm has to ensure that $n-1$ robots cannot form part of the pattern. To do so, a procedure is called making sure that, if a point in $F$ is in the path of a robot, then this point is either chosen as its destination, or avoided (the procedure is described in the appendix).
So that, if the configuration is associated with the \emph{Termination} phase, all the robots are static. 
\vspace{0.2cm}

\noindent\textbf{Using the Shifted Robot to Form a Static Guided Configuration.} 
If the elected robot is not shifted or if the shift angle is in $[0,\theta/8)$ (with $\theta = \theta(P)$), it moves on its circle to create a shift $\theta/8$. Also, if the shift angle is in $(\theta/8, \theta/4)$ and if the robots in the CEB-set $Q$ are not on the same circle as the shifted robot, then the shifted robot moves on its circle to create a shift $\theta/8$. If another robot is activated during the movement of the shifted robot (\ie, when the shift is not exactly $\theta/8$), it chooses not to move. 
When the shift angle is exactly $\theta/8$, the elected robot waits for the other robots in $Q$ to reach its circle. When this is the case, the shifted robot moves on its circle to create a shift angle of $\theta/4$. Then it moves toward the center to create an guided configuration \ie, the shifted robot $r_1$ moves radially such that $|r_1| = \min(|r_2|, |f_2|)/2$. 


\begin{lemma}\label{lem:phase FCB1 terminates}
    After executing the \emph{FCB1} phase, the configuration is static and associated with either the \emph{Termination} phase or the \emph{Almost Pattern Formation} (in finite time with probability one).
\end{lemma}

\section{Concluding Remarks}

Similarly to previous work~\cite{Yamashita2014}, the initial configuration should not contain multiplicity positions. 
In the case where the initial configuration contains points of multiplicity, a convenient solution would be to reuse known pattern formation algorithms (such as ours) and run a preliminary phase where multiplicity points are eliminated. This task is known as the scattering task in the literature~\cite{Bramas2015}. However, even the most recent developments~\cite{Bramas2015} only considers the SSYNC model. Of course, as our protocol also performs correctly in SSYNC, it is possible to combine the two to obtain a protocol in SSYNC that manages multiplicities both in $I$ and in $F$. Indeed, combining protocols in SSYNC is facilitated because moves are always aware of the latest configuration, so for all configurations that have multiplicities and do not belong to a legitimate path toward the target pattern, the scattering phase is run, until robots either reach a configuration where there is no point of multiplicity or a configuration that makes progress toward the target pattern. Extending this scheme to the ASYNC model requires to solve the open problem of ASYNC scattering, and making sure the combinations of protocols is feasible.


\bibliographystyle{plain}
\bibliography{async_pattern_formation}

\begin{thebibliography}{10}

\bibitem{BouzidDPT10}
Zohir Bouzid, Shlomi Dolev, Maria Potop{-}Butucaru, and S{\'{e}}bastien
  Tixeuil.
\newblock Robocast: Asynchronous communication in robot networks.
\newblock In Chenyang Lu, Toshimitsu Masuzawa, and Mohamed Mosbah, editors,
  {\em Principles of Distributed Systems - 14th International Conference,
  {OPODIS} 2010, Tozeur, Tunisia, December 14-17, 2010. Proceedings}, volume
  6490 of {\em Lecture Notes in Computer Science}, pages 16--31. Springer,
  2010.

\bibitem{bouzid2011pattern}
Zohir Bouzid and Anissa Lamani.
\newblock Robot networks with homonyms: The case of patterns formation.
\newblock In Xavier Défago, Franck Petit, and Vincent Villain, editors, {\em
  Stabilization, Safety, and Security of Distributed Systems}, volume 6976 of
  {\em Lecture Notes in Computer Science}, pages 92--107. Springer Berlin
  Heidelberg, 2011.

\bibitem{Bramas2015}
Quentin Bramas and S\'ebastien Tixeuil.
\newblock Tha random bit complexity of mobile robots cattering.
\newblock In {\em AD HOC NOW}, Lecture Notes in Computer Science, page to
  appear, Athens, Greece, July 2015. Springer Berlin Heidelberg.

\bibitem{Shantanu2010}
Shantanu Das, Paola Flocchini, Nicola Santoro, and Masafumi Yamashita.
\newblock On the computational power of oblivious robots: Forming a series of
  geometric patterns.
\newblock In {\em Proceedings of the 29th ACM SIGACT-SIGOPS Symposium on
  Principles of Distributed Computing}, PODC '10, pages 267--276, New York, NY,
  USA, 2010. ACM.

\bibitem{DieudonnePV10}
Yoann Dieudonn{\'{e}}, Franck Petit, and Vincent Villain.
\newblock Leader election problem versus pattern formation problem.
\newblock In Nancy~A. Lynch and Alexander~A. Shvartsman, editors, {\em
  Distributed Computing, 24th International Symposium, {DISC} 2010, Cambridge,
  MA, USA, September 13-15, 2010. Proceedings}, volume 6343 of {\em Lecture
  Notes in Computer Science}, pages 267--281. Springer, 2010.

\bibitem{Flocchini08}
Paola Flocchini, Giuseppe Prencipe, Nicola Santoro, and Peter Widmayer.
\newblock Arbitrary pattern formation by asynchronous, anonymous, oblivious
  robots.
\newblock {\em Theor. Comput. Sci.}, 407(1-3):412--447, 2008.

\bibitem{Yamashita2010}
Nao Fujinaga, Hirotaka Ono, Shuji Kijima, and Masafumi Yamashita.
\newblock Pattern formation through optimum matching by oblivious corda robots.
\newblock In Chenyang Lu, Toshimitsu Masuzawa, and Mohamed Mosbah, editors,
  {\em Principles of Distributed Systems}, volume 6490 of {\em Lecture Notes in
  Computer Science}, pages 1--15. Springer Berlin Heidelberg, 2010.

\bibitem{Yamashita2012}
Nao Fujinaga, Yukiko Yamauchi, Shuji Kijima, and Masafumi Yamashita.
\newblock Asynchronous pattern formation by anonymous oblivious mobile robots.
\newblock In MarcosK. Aguilera, editor, {\em Distributed Computing}, volume
  7611 of {\em Lecture Notes in Computer Science}, pages 312--325. Springer
  Berlin Heidelberg, 2012.

\bibitem{prencipe2007impossibility}
Giuseppe Prencipe.
\newblock Impossibility of gathering by a set of autonomous mobile robots.
\newblock {\em Theoretical Computer Science}, 384(2):222--231, 2007.

\bibitem{Suzuki99}
Ichiro Suzuki and Masafumi Yamashita.
\newblock Distributed anonymous mobile robots: Formation of geometric patterns.
\newblock {\em {SIAM} J. Comput.}, 28(4):1347--1363, 1999.

\bibitem{YamashitaS10}
Masafumi Yamashita and Ichiro Suzuki.
\newblock Characterizing geometric patterns formable by oblivious anonymous
  mobile robots.
\newblock {\em Theor. Comput. Sci.}, 411(26-28):2433--2453, 2010.

\bibitem{Yamauchi13}
Yukiko Yamauchi and Masafumi Yamashita.
\newblock Pattern formation by mobile robots with limited visibility.
\newblock In Thomas Moscibroda and Adele~A. Rescigno, editors, {\em Structural
  Information and Communication Complexity - 20th International Colloquium,
  {SIROCCO} 2013, Ischia, Italy, July 1-3, 2013, Revised Selected Papers},
  volume 8179 of {\em Lecture Notes in Computer Science}, pages 201--212.
  Springer, 2013.

\bibitem{Yamashita2014}
Yukiko Yamauchi and Masafumi Yamashita.
\newblock Randomized pattern formation algorithm for asynchronous oblivious
  mobile robots.
\newblock In Fabian Kuhn, editor, {\em Distributed Computing}, volume 8784 of
  {\em Lecture Notes in Computer Science}, pages 137--151. Springer Berlin
  Heidelberg, 2014.

\end{thebibliography}

\newpage

\appendix

\section{Pseudo-code}

The algorithm consists of several phases (that are not exactly divided in the same way as in the paper), described in pseudo-code in algorithm \ref{formPattern}. Each procedure call has a phase condition. A phase is executed if and only if its phase condition is not verified. If the condition is verified, the next phase is considered. Each time a robot is activated, it must find the first phase with a condition that is not verified and follow the corresponding instructions. Each phase is done not to break the previous phase conditions. The condition line \ref{line:last movement condition} is checked before because the movement line \ref{line:last movement} breaks the condition of the other phases.
Line $6$ corresponds to the robot election. The goal is to select a robot by performing random radial movement in the CEB-set. Lines \ref{line:createGuidedConfiguration} to \ref{line:rotateRobotOnCircle} correspond to the deterministic pattern formation algorithm. For simplicity, our pseudo-code does not handle the case of a pattern that contains points of multiplicity.

{\LinesNumbered
\begin{algorithm}[]\scriptsize
    \SetAlgoRefName{formPattern}
    \caption{main algorithm that forms a pattern F}
    \label{formPattern}
    \If{$m = |c(F)| > 1$} {
        
        \eIf{$F\setminus\{c(F)\}$ is formed} {
            the $m$ robots closest to the center move toward $c(F)$    \\
            \textbf{Return}    
        } {
            $F\Leftarrow F\setminus\{c(F)\}\cup\{\text{$m$ distinct points close to $c(F)$}\}$   
        }
    }
    $ClosestF \leftarrow \{\text{smallest elements in $F$ that does not hold $C(F)$}\}$\\
    $ClosestP \leftarrow \{\text{smallest elements in $P$}\}$\\
    \eIf{$ClosestP = \{r\}$ and $\exists f\in ClosestF$ s.t. $P\setminus\{r\}\approx F-\{f\}$\label{line:last movement condition}}{
          $r$ moves toward the closest $f$\label{line:last movement}
    }{
        $r_1\leftarrow selectARobot()$\label{line:selectARobot}\\    
    }
    
    \eIf{each point of $F$ is occupied except maybe one of the smallest and the other robots are close to their destinations} {
        Each robot $r\neq r_1$ moves toward the closest point in $F$  \\
        \textbf{Return}   
    }{
        $F\leftarrow$ $F$ where every points of multiplicity $m$ is replaces by $m$ points close to it and on the same circle.
    }

    $P'\leftarrow P\setminus\{r_1\}$\label{line:createGuidedConfiguration}\\
    $F'\leftarrow F\setminus\{f_1\}$ (with $f_1\in ClosestF$)\\
    
    Let $C_1$, $C_2$, $\ldots$, $C_m$ be the $m$ circles centered at $c(P)$ with decreasing radius, each containing at least one point in $F'$. For each $1\leq i\leq m$, let $m_i = |C_i\cap F'| > 0$\\
    \If{$|C_i\cap F'|=2$}
    {
        $fixEnclosingCircle()$\\
    }
    \For{$i=1, 2,\ldots, m$}{
        $cleanExterior(i)$\label{line:cleanExterior}\\
        $locateEnoughRobots(i)$\label{line:locateEnoughRobots}\\
        $removeRobotsInExcess(i)$\label{line:removeRobotsInExcess}
    }
    $rotateRobotOnCircle()$\label{line:rotateRobotOnCircle}
    
\end{algorithm}}

\begin{algorithm}\scriptsize
    \SetAlgoRefName{handlePartiallyFormedPattern}
    \caption{executed before the robot election to handle configuration that can create a configuration that verifies in line \ref{line:last movement condition} of the main algorithm}
    \label{handlePartiallyFormedPattern}
    \If{$\{F_r\cap F_r^c\}$ is a partition of $F$ such that $F_r^c\approx P\setminus Q$ \textbf{and}\\ $|Q| - 1$ robots in $Q$ are located on an halfline $[c(P), f)$, with $f\in F_r$}{
        $d_1\leftarrow$ radius of $C(F_r)$\\
        $d_2\leftarrow \min\{d\;|\;\Disc{d_1}\cap exterior(\Disc{d}) \cap F_r = \emptyset\}$\\
        $d\leftarrow (d_1+d_2)/2$\\
        \If{$\exists r\in Q$ s.t. $|r|>d$}{
            \eIf{$\exists r\in Q$ s.t. $|r|>d_1$}{
                \For{$r\in Q$ s.t. $|r|>d_1$}{
                    $r$ moves radially at distance $|d_1|$ from $c(P)$
                }
            }{
                \For{$r\in Q$ s.t. $|r|>d$}{
                    $r$ moves radially at distance $|d|$ from $c(P)$
                }
            }
            \textbf{exit}
        }
    }
\end{algorithm}

\begin{algorithm}[]\scriptsize
    \SetAlgoRefName{selectARobot}
    \caption{select a robot}
    \label{selectARobot}
    \textbf{Phase Condition:} There exists a selected robot $r_s$\\
    \textbf{Returned Value:} $r_s$\\
    \uIf{$P$ contains a CEB-set $Q$ with shifted robot}
        {
            $r_e\leftarrow$ the shifted robot\\
            $\varepsilon\leftarrow$ the shift angle\\
            $S\leftarrow \{r\in P \;|\; |r|>|r_e|\}$\\
            \uIf{$S\neq\emptyset$ and $\varepsilon\neq \theta/8$}{
                $r_e$ moves on its circle to create a $\theta/8$-shifted angle          
            }
            \uElseIf{$S\neq\emptyset$ and $\varepsilon=\theta/8$}{
                  \For{$r\in S$}{
                        $r$ moves radially at distance $|r|$ from $c(P)$                  
                  }          
            }
            \uElseIf{$\varepsilon<\theta/4$}{

                $r_e$ moves on its circle to create a $\theta/4$-shifted angle        
            }
            \Else{
                $r_e$ moves radially toward $c(P)$ to become selected
            }
        }
    \uElseIf{$P$ contains a CEB-set $Q$}
        {
            
            \eIf{$P\setminus Q \neq \emptyset$}{
                $d\leftarrow\min_{r'\in P\setminus Q}|r'|$
            }{
                $d\leftarrow\infty$
            }
            handlePartiallyFormedPattern()\\
            \For{$r\in P$}{
                \uIf{$|r| < \frac{7}{8}\min_{r\in Q\setminus\{r\}}|r|$}
                {
                    $r$ moves on its circle to create a $\theta/8$-shifted angle
                }
                \ElseIf{$\{r'\neq r\;\textrm{ s.t. }\;|r'|<|r|\} = \emptyset$}{
                    $c\leftarrow$ 1 with probability $1/2$, 0 otherwise\\
                    \eIf{$c$}{
                        $r$ moves a distance $|r|/8$ toward $c(P)$
                    }{
                        $r$ moves a distance $\min\left(\frac{1}{2}\left(d-\left|r\right|\right), \frac{1}{7}|r|\right)$ away from $c(P)$
                    }
                }
             } 
        }\uElse
        {
            $r_{1}\leftarrow$ unique robot with maximum view that does not hold $C(P)$\\
            \eIf{$\exists r\in [r_{1}, c(P)]$, $P\cup\{r\}\setminus\{r_{1}\}$ is biangular}{
                $r_{1}$ moves toward $r$
            }{
                $r_{1}$ moves toward $c(P)$
            }
        }
\end{algorithm}



\begin{algorithm}[]\scriptsize
    \SetAlgoRefName{cleanExterior(i)}
    \caption{remove robots outside $C_i$}
    \label{cleanExterior(i)}
    \textbf{Phase Condition:} $i = 1$ or $|interior(C_{i-1}\cap exterior(C_i)\cap P'| = 0$\\
    $r \leftarrow$ smallest robot in $exterior(C_i)$\\
    $C\leftarrow$ circle centered at $c(P)$ that contains $r$\\
    \eIf{$|C\cap P| > 1$}{
        $r$ moves toward $c(P)$ without reaching the circle of another robot nor $C_i$
    }{
        $a\leftarrow \max_{r' \in C_i}ang(r_{2}, c(P), r')$   
        \eIf{$ang(r_{2}, c(P), r) > a$} {
            $r$ moves toward $c(P)$ to reach $C_i$        
        }{
            $r$ moves on $C_i$ in the direct orientation to have an angle $(2\pi + a)/2$
        }
    }
\end{algorithm}

\begin{algorithm}[]\scriptsize
    \SetAlgoRefName{locateEnoughRobots(i)}
    \caption{locate enough robots on $C_i$}
    \label{locateEnoughRobots(i)}
    \textbf{Phase Condition:} $|C_{i}\cap P'| \geq m_i$\\
    $r \leftarrow$ greatest robot in $interior(C_i)$\\
    $C\leftarrow$ circle centered at $c(P)$ that contains $r$\\
    \eIf{$|C\cap P| > 1$}{
        $r$ moves away from toward $c(P)$ without reaching the circle of another robot nor $C_i$
    }{
        $a\leftarrow \min_{r' \in C_i}ang(r_{2}, c(P), r')$  \\ 
        \eIf{$ang(r_{2}, c(P), r) < a$} {
            $r$ moves away from $c(P)$ to reach $C_i$        
        }{
            $r$ moves on $C_i$ in the indirect orientation to have an angle $a/2$
        }
    }
\end{algorithm}

\begin{algorithm}[]\scriptsize
    \SetAlgoRefName{removeRobotsInExcess(i)}
    \caption{remove robot in excess on $C_i$}
    \label{removeRobotsInExcess(i)}
    \textbf{Phase Condition:} $|C_{i}\cap P'| = m_i$\\
        // \textit{Where $Poly(a,b)$ denotes the set of vertice of the regular $a$-gon centered at $c(P)$ that have the line $c(P)r_{2}$ as axis of symmetry union $b$ points evenly distributed in the arc between angle $0$ and $\pi/a$}\\
    \eIf{$i > 1$}
    {
        $r \leftarrow $ smallest robot on $C_i$\\
        $r$ moves toward $c(P)$ without reaching the circle of another robot    
    }{
        \eIf{robots the $m_1$ greatest robots on $C_1$ forms $Poly(m_1, 0)$}{
            $r \leftarrow $ smallest robot on $C_1$\\
            $r$ moves toward $c(P)$ without reaching the circle of another robot 
        }{
            robots on $C_1$ form $Poly(m_1, |P\cap C_1| - m_1)$
        }
    }
\end{algorithm}

\begin{algorithm}[]\scriptsize
    \SetAlgoRefName{rotateRobotOnCircle}
    \caption{move the robots on their circle to reach their final destination}
    \label{rotateRobotOnCircle}
    \textbf{Phase Condition:} $F'=P'$\\
    Let $r_1, \ldots, r_{n-1}$ be the robots in $P'$ in the lexicographic order of their polar coordinates in the global coordinate system.\\
    Let $d_1, \ldots, d_{n-1}$ be the point of $F'$ in the lexicographical order of their polar coordinates in the global coordinate system.\\
    
    \For{$i = 1, \ldots, n-1$}{
        $A\leftarrow$ the arc of the circle of $r_i$ delimited by $r_i$ and $d_i$ that does not contains the point of angle $0$\\
        \If{$A\cap P'\neq \emptyset$}{
            $c\leftarrow$ closest robot in $A\cap P'$\\
            $d\leftarrow$ point of $A$ in the middle $r_i$ and $c$\\
            $A\leftarrow$ the arc of the circle of $r_i$ delimited by $r_i$ and $d$
        }
        \If{$r_i\in C(P)$ and }{
            $d\leftarrow$ the farthest point on $A$ so that $C(P)$ does not change 
        }    
        $r_i$ moves on $A$ toward $d$
    }
\end{algorithm}

\section{Missing Algorithm Details}

\subsection{\texorpdfstring
{Pattern Formation when $|C(F)\cap F'|=2$}
{Pattern Formation when |C(F) inter F'|=2}}\label{sec:pattern formation when C(F) inter F' = 2}

\begin{algorithm}[]\scriptsize
    \SetAlgoRefName{fixEnclosingCircle}
    \caption{locate the robot of $C(P)\cap P'$ on $C(P)\cap F'$ when $|C(F)\cap F'|=2$}
    \label{fixEnclosingCircle}
    \textbf{Phase Condition:} $|C(F)\cap F'|\neq 2$ or there are only two robot in $C(P)$ located on the two point of $C(F)\cap F'$\\
    \eIf{$|C(P)\cap P'| = 2$}{
        $r\leftarrow$ greatest robot in interior $C(P)$\\
        $C\leftarrow$ circle centered at $c(P)$ that contains $r$\\
        \eIf{$|C\cap P| > 1$}{
            $r$ moves away from toward $c(P)$ without reaching the circle of another robot nor $C(P)$
        }{
            $a\leftarrow \min_{r' \in C(P)}ang(r_{2}, c(P), r')$   \\
            \eIf{$ang(r_{2}, c(P), r) < a$} {
                $r$ moves away from $c(P)$ to reach $C(P)$        
            }{
                $r$ moves on $C(P)$ in the indirect orientation to have an angle $a/2$
            }
        }
    }{
        $r\leftarrow$ greatest robot in $C(P)$\\
        $r'\leftarrow$ smallest robot in $C(P)$\\
        \eIf{$r$ and $r'$ are located the points of $C(P)\cap F'$}
        {
            $r''\leftarrow$ second smallest robot in $C(P)\cap P'$\\
            $r''$ moves toward $c(P)$ without reaching the circle of another robot
        }{
            Let $r_1, \ldots, r_k$ be the other robots in $C(P)\cap P'$ in the lexicographical order of their polar coordinates\\
            \emph{// perform the following movements while preserving $C(P)$ and the ordering of robots}\\
            $r$ moves on $C(P)$ toward the greatest point in $C(P)\cap F'$\\
            $r'$ moves on $C(P)$  toward the smallest point in $C(P)\cap F'$\\
            \For{$i=1,\ldots,j$}
            {
                $a\leftarrow ang(r_{2}, c(P), r') + i\times(ang(r_{2}, c(P), r')+ang(r_{2}, c(P), r))/(j+1)$\\
                $r_i$ moves on $C(P)$ toward the point in $C(P)\cap F'$ with angle $a$
            }
        }
    }
\end{algorithm}

We execute this special phase before executing the first sub-phase of phase~\ref{phase3}, if $|C(F)\cap F'|=2$. If $|C(F)\cap F|=2$ and there are not exactly two robots in $C(P)$ located at the two points in $C(F)\cap F$, then the following is executed.

If there are only two robots on $C(P)$, then the greatest robot in $interior(C(P))$ reaches $C(P)$, while remaining smaller than robots in $C(P)$ (see Action~$locateEnoughRobots(i)$). Now, there are at least three robots on $C(P)$. The greatest robot $r$ in $C(P)$ moves toward the greatest robot in $C(F)$, the smallest $r'$ moves toward the other point in $C(P)\cap F'$, and the other robots choose evenly distributed destinations between $r$ and $r'$. Those movements are done while keeping $C(P)$ and the ordering unchanged. The smallest robot is chosen for $r'$ instead of the second greatest so that no robot can prevent $r'$ to reach the smallest point in $C(P)\cap F'$, especially if it has a null angle. Once $r$ and $r'$ reach their destination, the other robots can leave $C(P)$, starting from the smallest. Those last movements change the ordering of $r$, so that it becomes the second greatest robot.

\subsection{Almost Pattern Formation}

\noindent\textbf{Reach the Correct Circle.}
This sub-phase consists in moving robots so that there is the right number of robots on each circle centered at $c(P)$.
Let $C_1$, $C_2$, $\ldots$, $C_m$ be the $m$ circles centered at $c(P)$ with decreasing radius, each containing at least one point in $F'$. For each $1\leq i\leq m$, let $m_i = |C_i\cap F'| > 0$. We have $\sum_{i=1}^m m_i = |F'| = n-1$.

Before beginning this sub-phase, the robots that have a null angle (except $r_2$) move on their circle following the direct orientation while preserving the order (\ie, without reaching another robot), so that no robot has a null angle (except $r_2$). This is required for proper operation of action $ii)$, defined below. Also, if $m_1=2$, since two robots cannot move on $C(P)$ synchronously to keep $C(P)$ unchanged, we need to execute a special procedure (see subsection~\ref{sec:pattern formation when C(F) inter F' = 2}) to ensure that the two robots are located at the two points of $C(P)\cap F'$, keeping $C(P)$ unchanged. Informally, this procedure moves another robot on $C(P)$ if there are only two robots on it, then the two greatest robots reach their destination point in $C(P)\cap F'$. Then, the other robots can leave safely $C(P)$. From now on, we suppose that if $m_1=2$, then $C_1$ already contains two robots located at their corresponding point in $F$.

Recursively, we move robots such that each circle $C_i$ contains exactly $m_i$ robots.
We define the following procedure for a given $i$, $1\leq i\leq m$. The procedure executes three actions sequentially and assumes, if $i>1$, that $|interior(C_{i-1}) \cap P'| = \sum_{j=i} ^{m}m_j$.
\tolerance=3000
\begin{itemize}
    \item[\emph{i)}]
    \emph{\textbf{cleanExterior(i):}} 
    If $i>1$, moves robots so that $interior(C_{i-1}) \cap exterior(C_i) \cap P'{=}\emptyset$: the smallest robot in $exterior(C_i)$ moves to $C_i$ while it remains greater than robots already in $C_i$. To do so, it can moves a little toward $c(P)$, so that there is no other robot in its circle, then moves on its circle so that its angle is greater than the angles of robots in $C_i$, and finally moves radially toward $c(P)$ to reach $C_i$ (\emph{e.g.}, movement of robot $r_5$ in Figure~\ref{fig:execution of action i and ii}). If $i=m$, we also ensure that its angle is less than $2\pi - ang(r_1, c(P), r_2)$, in order not to break the guided configuration. We repeat this procedure until there are no more robots between $C_{i-1}$ and $C_i$ 
    (\emph{e.g.}, movement of robots $r_5$ and $r_6$ in Figure~\ref{fig:execution of action i and ii}). 

    \item[\emph{ii)}]
    \emph{\textbf{locateEnoughRobots(i):}} Move robots so that $|C_i \cap P'| \geq m_i$: we have $|interior(C_{i}) \cap P'| \geq 1$. Indeed, if $i = 1$, $|interior(C_{1}) \cap P'| = m_1-|C_1 \cap P'|\geq 1$, otherwise, there are by hypothesis at least $m_i$ robots inside $C_{i-1}$, and after performing action $i)$, theses robots are not between $C_i$ and $C_{i-1}$. The greatest robot in $interior(C_i)$ now moves to $C_i$ while remaining smaller than robots already in $C_i$. To do so, it can move a little away from $c(P)$ so that no other robot remains in its circle, then move on its circle so that its angle is smaller than the angles of robots in $C_i$ (but not null), and  finally move radially away from $c(P)$ to reach $C_i$ 
    (\emph{e.g.}, movement of robot $r_7$ in Figure~\ref{fig:execution of action i and ii}).

\item[\emph{iii)}]
    \emph{\textbf{removeRobotsInExcess(i)}} Move robots so that $|C_i \cap P'|\leq m_i$: the smallest robot in $C_i$ moves a little toward $c(P)$ (here, ``a little'' means a small distance such that the order is preserved, \ie, the robot does not reach the circle of another robot nor $C_{i+1}$). We repeat this process until there are exactly $m_i$ robots on $C_i$.\\
    If $i=1$, make sure that $C(P)$ does not change. (see Figures \ref{fig:remove robots in excess when i=1} and \ref{fig:remove robots in excess when i=1, bis}). 
    However, we know that $m_1\geq 3$.    The $m_1$ greatest robots $r_{n}, \ldots, r_{n-m_1}$ remain on $C_1$, and have to be the only robots to hold $C(P)$. To do so, the angles formed by two consecutive robots in $\{r_{n}, \ldots, r_{n-m_1}\}$ have to be smaller than, or equal to $\pi$. This is obtained by moving the robots on $C_1$, while preserving the ordering and $C(P)$, such that $r_{n}, \ldots, r_{n-m_1}$ form the regular $m_1$-gon that has the line $\overline{c(P)r_{2}}$ as an axis of symmetry (see Figure~\ref{fig:remove robots in excess when i=1}). At the same time, if the $m_1$-gon is not formed yet, other robots in $C_1$ move on $C_1$ to be evenly distributed in the arc between angle $0$ and $\pi/m_1$ (see the blue arc in Figure~\ref{fig:remove robots in excess when i=1}), again while preserving the ordering and $C(P)$. Overall, each robot on $C_1$ has a deterministic (and non-blocking) destination. Once the $m_1$-gon is formed (even if some other robots are still moving), the smallest robot in $C(P)\cap P$ moves a little toward $c(P)$ (see Figure\ref{fig:remove robots in excess when i=1, bis}). This is repeated until only $r_{n}, \ldots, r_{n-m_1}$ remain on $C_1$.
\end{itemize}
After executing those actions for $i=1,2, \ldots, m$, each circle contains the proper number of robots \ie,  there are $m_i$ robots on $C_i$.


\tikzstyle{every node}=[circle, draw, fill=black!50,%
                        inner sep=0pt, minimum width=4pt]

\tikzstyle{fakeNode}=[circle, draw=black, fill=white]
\begin{figure}[htb]\centering
    \begin{subfigure}[t]{0.32\textwidth}\centering
       \begin{tikzpicture}[every path/.style={>=stealth}, scale=0.80]
        
        \draw[dashed] (180:1.5) arc (180:-180:1.5);
        
        \draw[dashed] (0:0) -- (0:1.5);
        \draw[<-] (0:1.6) -- (0:3);
        \draw[dashed] (0:0) -- (-60:2.3);
        \draw[dashed] (0:0) -- (-20:3);
        \draw[<->] (0:1.9) arc (0:-60:1.9);
        
        \draw   (-40:2.1) node[opacity=0, text opacity=1] {$\theta$};        
        
        \draw[<->] (0:2.2) arc (0:-20:2.2);
        \draw   (-10:3) node[opacity=0, text opacity=1] {$<\theta/2$};        
        
        \draw (0:0) node[minimum width=1.5pt,fill=black, label={180:$c(P)$}] {};
        \draw (-20:0.4) node[label={-70:$r_{1}$}] {};
        \draw (0:1.2) node[label={110:$r_{2}$}] {};
        \draw (0:1.5) node[fill=white, label={30:$f_{2}$}] {};
        \draw (-60:1.5) node[fill=white] {};

        \end{tikzpicture}
        \caption{\small part of a guided configuration, located near the center.}\label{fig:example of a guided configuration}
    \end{subfigure}
~~~\begin{subfigure}[t]{0.30\textwidth}\centering
        \begin{tikzpicture}[every path/.style={>=stealth}, scale=0.55]
        
        \draw (180:3.3) node[opacity=0, text opacity=0] {};
        \draw (0:4) node[opacity=0, text opacity=0] {};

        \draw (30:1.8) node[opacity=0, text opacity=1] {$C_2$};
        \draw[dashed] (0:0) circle (1.5);
        \draw (40:3.3) node[opacity=0, text opacity=1] {$C_1$};
        \draw[dashed] (0:0) circle (3);
        \draw[] (0:3.3) -- (180:0);
        \draw[dashed] (0:0) -- (-20:1.9);
        
        \draw[color=red, line width=0.6mm] (0:1.5) arc (0:-40:1.5);
        \draw[color=red] (-40:1.5) node[fill=red] {};
        \draw (0:1.5) node[label={170:$r_{2}$}] {};
        \draw (-20:0.75) node[label={190:$r_{1}$}] {};
        
        \draw (160:1.5) node[label={-45:$r_{3}$}] {};

        \draw (-90:1.5) node[label={-135:$r_{4}$}] {};

        \draw (-130:2.3) -- (-130:2.0);
        \draw[] (-130:2.0) arc (-130:-80:2.0);
        \draw[->] (-80:2.0) -- (-80:1.5);
        \draw (-130:2.3) node[label={-90:$r_{5}$}] {};
        
        \draw[] (-110:2.3) arc (-110:-60:2.3);
        \draw[->] (-60:2.3) -- (-60:1.5);
        \draw (-110:2.3) node[label={-90:$r_{6}$}] {};
        
        \draw[] (110:2.7) arc (110:60:2.7);
        \draw[->] (60:2.7) -- (60:3);
        \draw (110:2.7) node[label={-45:$r_{7}$}]  {};
        
        \draw (90:3) node[label={45:$r_{8}$}] {};
        \draw (180+30:3) node[label={180:$r_{9}$}] {};
        \draw (-90+30:3) node[label={-45:$r_{10}$}] {};

        \end{tikzpicture}
        \caption{\small Execution of functions $locateEnoughRobots(i)$ and $cleanExterior(i)$}\label{fig:execution of action i and ii}
    \end{subfigure}
    ~~~\begin{subfigure}[t]{0.30\textwidth}\centering
        \begin{tikzpicture}[every path/.style={>=stealth}, scale=0.55]

        \draw (30:1.8) node[opacity=0, text opacity=1] {$C_2$};
        \draw[dashed] (0:0) circle (1.5);
        \draw (40:3.3) node[opacity=0, text opacity=1] {$C_1$};
        \draw[dashed] (0:0) circle (3);
        \draw[] (0:3.3) -- (180:0);
        \draw[dashed] (0:0) -- (-20:1.9);
        
        \draw[color=red, line width=0.6mm] (0:1.5) arc (0:-40:1.5);
        \draw[color=red] (-40:1.5) node[fill=red] {};
        
        \draw (0:1.5) node[fill=white, inner sep = 0.08cm] {};
        \draw (0:1.5) node[label={170:$r_{2}$}] {};
        \draw (-20:0.75) node[label={190:$r_{1}$}] {};
        
        \draw (160:1.5) node {};
        \draw (-90:1.5) node {};
        \draw (-80:1.5) node {};
        \draw (-60:1.5) node {};

        \draw[->] (160:1.65) arc (160:90:1.65);
        \draw[->] (270:1.75) arc (270:110:1.75);
        \draw[->] (280:1.65) arc (280:200:1.65);
        \draw[->] (300:1.70) arc (300:310:1.70);
        
        \draw (90:1.5) node[fill=white] {};
        \draw (110:1.5) node[fill=white] {};
        \draw (-160:1.5) node[fill=white] {};
        \draw (-50:1.5) node[fill=white] {};
        
        \draw (60:3) node {};
        \draw (90:3) node {};
        \draw (180+30:3) node {};
        \draw (-90+30:3) node {};
        
        \draw[->] (60:3.25) arc (60:110:3.25);
        \draw[->] (90:3.15) arc (90:140:3.15);
        \draw (110:3) node[fill=white] {};
        \draw (140:3) node[fill=white] {};
        \draw[->] (210:3.15) arc (210:190:3.15);
        \draw[->] (300:3.15) arc (300:320:3.15);
        \draw (190:3) node[fill=white] {};
        \draw (320:3) node[fill=white] {};
        \end{tikzpicture}
        \caption{\small Execution of function $rotateRobotOnCircle$}\label{fig:reaching destinations}
    \end{subfigure}

\label{fig:Second phase of the pattern formation agorithm}

\begin{subfigure}[b]{0.48\textwidth}\centering
        \begin{tikzpicture}[every path/.style={>=stealth}, scale=0.7]

        \draw (100:1.8) node[opacity=0, text opacity=1] {$C_2$};
        \draw[dashed] (0:0) circle (1.5);
        \draw (80:2.6) node[opacity=0, text opacity=1] {$C_1$};
        \draw[dashed] (0:0) circle (3);
        \draw[] (0:3.3) -- (180:0);
        \draw[dashed] (0:0) -- (-20:1.9);
        
        \draw[color=red, line width=0.6mm] (0:1.5) arc (0:-40:1.5);
        \draw[color=red] (-40:1.5) node[fill=red] {};
        \draw (0:1.5) node[label={170:$r_{2}$}] {};
        \draw (-20:0.75) node[label={190:$r_{1}$}] {};

        \draw[color=blue, line width=0.2mm] (0:3) arc (0:360/6:3);
        
        \draw[dashed] (0:0) -- (360/3-360/6:3.2); 
        \draw[dashed] (0:0) -- (2*360/3-360/6:3.2);   
        \draw[dashed] (0:0) -- (3*360/3-360/6:3.2);   
        \draw[dashed] (360/6/3:2.7) -- (360/6/3:3.4);    
        \draw[dashed] (2*360/6/3:2.7) -- (2*360/6/3:3.3);      
        
        \draw (110:3) node[label={-90:$r_{6}$}] {};
        \draw (140:3) node[label={-45:$r_{7}$}]  {};
        \draw (250:3) node[label={45:$r_{8}$}] {};
        \draw (270:3) node[label={90:$r_{9}$}] {};
        \draw (330:3) node[label={135:$r_{10}$}] {};
                
        \draw[->] (110:3.4) arc (110:360/6/3:3.4);
        \draw[->] (140:3.3) arc (140:2*360/6/3:3.3);
        \draw[->] (250:3.2) arc (250:360/6:3.2);
        \draw[->] (270:3.3) arc (270:2*360/3-360/6:3.3);
        \draw[->] (330:3.2) arc (330:3*360/3-360/6:3.2);
        
        \draw (180:4) node[opacity=0, text opacity=0] {};
        \draw (0:5.5) node[opacity=0, text opacity=0] {};
        
        \end{tikzpicture}
    \caption{Execution of $removeRobotsInExcess(i)$ when $i=1$: the 3 greatest robots form the regular $3$-gon, and the other robots are evenly distributed on blue arc.}\label{fig:remove robots in excess when i=1}
    \end{subfigure}

    ~~~\begin{subfigure}[b]{0.48\textwidth}\centering
        \begin{tikzpicture}[every path/.style={>=stealth}, scale=0.7]

        \draw (180:4) node[opacity=0, text opacity=0] {};
        \draw (0:5.5) node[opacity=0, text opacity=0] {};
        
        \draw (100:1.8) node[opacity=0, text opacity=1] {$C_2$};
        \draw[dashed] (0:0) circle (1.5);
        \draw (80:2.6) node[opacity=0, text opacity=1] {$C_1$};
        \draw[dashed] (0:0) circle (3);
        \draw[] (0:3.3) -- (180:0);
        \draw[dashed] (0:0) -- (-20:1.9);
        
        \draw[color=red, line width=0.6mm] (0:1.5) arc (0:-40:1.5);
        \draw[color=red] (-40:1.5) node[fill=red] {};
        \draw (0:1.5) node[label={170:$r_{2}$}] {};
        \draw (-20:0.75) node[label={190:$r_{1}$}] {};
        
        \draw[color=blue, line width=0.2mm] (0:3) arc (0:360/6:3);

        \draw[dashed] (0:0) -- (360/3-360/6:3.2); 
        \draw[dashed] (0:0) -- (2*360/3-360/6:3.2);   
        \draw[dashed] (0:0) -- (3*360/3-360/6:3.0);   
        
        \draw[->] (2*360/6/3+7:3) -- (2*360/6/3+7:2.5);    
        \draw (360/3/3:2.75) node[opacity=0, text opacity=1] {\tiny$(2)$};    
        \draw[->] (360/6/3:3) -- (360/6/3:2.2);   
        \draw (360/6/3-5:2.7) node[opacity=0, text opacity=1] {\tiny$(1)$};

        \draw (360/6/3:3) node[label={0:$r_{6}$}] {};
        \draw (2*360/6/3+7:3)node[label={0:$r_{7}$}]  {};
        \draw (360/3-360/6:3) node[label={45:$r_{8}$}] {};
        \draw (2*360/3-360/6:3) node[label={35:$r_{9}$}] {};
        \draw (3*360/3-360/6:3) node[label={-90:$r_{10}$}] {};

        \end{tikzpicture}
        \caption{Execution of $removeRobotsInExcess(i)$ when $i=1$: when the 3 greatest robots form the regular $3$-gon, $r_6$ moves toward $c(P)$. Then $r_7$ moves a little toward $c(P)$.}\label{fig:remove robots in excess when i=1, bis}
    \end{subfigure}
\caption{Illustration of the pattern formation algorithm}\label{fig:resolution of the case iv)}
\end{figure}
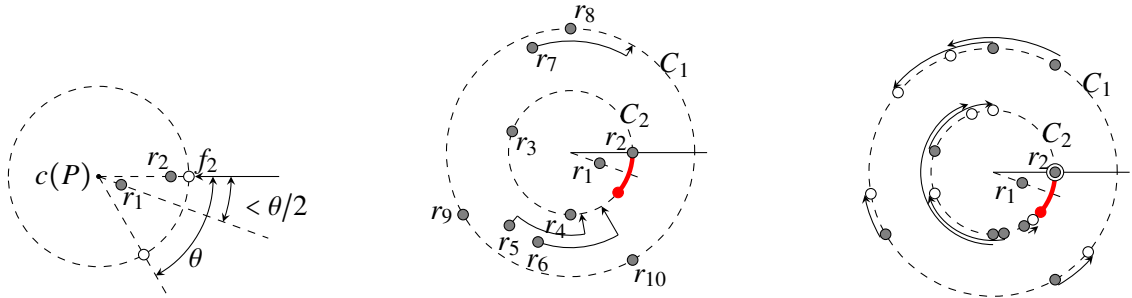
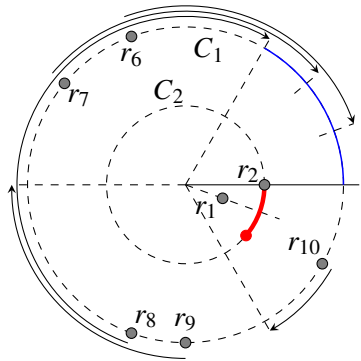
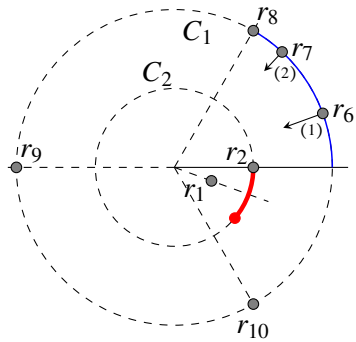

\subsection{Robot Election Pre-phase}

This pre-phase is executed before the robot election, \ie, when the current configuration $P$ contains a CEB-set $Q$ of cardinal $m$.
Before executing the robot election algorithm, a robot checks if the current configuration satisfies the following conditions:
\begin{itemize}
    \item[i)] the pattern can be rotated so that robots in $P\setminus Q$ are located at points in $F$,
    \item[ii)] among the $m$ remaining points of $F$, denoted $F_r$, at least $m-1$ are on $m-1$ half lines, each containing exactly one robot in $Q$.
\end{itemize}
If those conditions are not both satisfied, the robot election is performed as previously described. 
Otherwise three cases can happen. Let $d_1$ be the radius of the smallest circle enclosing $F_r$. If $\Disc{d_1} \cap F_r \neq \emptyset$, let $d_2$ be the smallest radius such that $\Disc{d_1}\cap exterior(\Disc{d_2}) \cap F_r = \emptyset$, otherwise let $d_2 = d_1$. Also, let $d = (d_1 + d_2) / 2$ ($\Disc{a}$ denotes the open disc centered at $c(P)$ of radius $a$)

In the first case, at least one robot $r$ satisfies $|r| > d_1$. Then, all such robots move toward $c(P)$ to reach the circle of radius $d_1$. After each robot reaches its destination, either the whole configuration forms $F$, or $P$ from which we remove the robot with maximum view form $F$ from which we remove a point with maximal view (the configuration is associated with the termination phase), or the configuration is still with the same phase and satisfies the second or the third case. If the configuration is no more associated with the same phase, then the configuration is static.

In the second case, at least one robot $r$ satisfies $|d_1| \geq |r|$ and $|r| > |d|$. Then, all such robots move toward $c(P)$ to reach the circle of radius $d$. After each robot reaches its destination, the configuration satisfies the third case. During this phase the configuration is associated with the same phase since no robot reaches a point in $F_r$.

In the third case, the robots in $Q$ are at most at distance $d$ from $c(P)$. Then the robot election proceeds as previously described, except that a robot with  destination $p$ such that $|p|\geq d$ does not move. During this phase the configuration is associated with the same phase. Indeed, if $d_1 \neq d$, then there is at least one point in $F_r$ that does not contain a robot (and it is not a point with maximum view since some points are closer to $c(P)$), otherwise (all points in $F_r$ are at distance $d$ to $c(P)$), there are at least two robots inside $\Disc{d}$ (because the whole configuration associated with this phase) and they cannot reach the circle of radius $d$.

\subsection{Formation of a Guided Configuration 2}

If $P$ does not have a CEB-set, then, by Theorem~\ref{thm:sym conf have a CEB}, $\rho(P) = 1$ and $P$ has a unique smallest robot $r_1$. 
The robots check if there is a position in the segment $[r_1, c(P))$ such that the whole configuration is equiangular or biangular (possibly with a shifted robot). If it is the case, the first such point in the path becomes the destination of $r_1$. Otherwise, $r_1$ is ordered to move toward the center until the configuration becomes guided. The resulting configurations cannot have a CEB-set (except if it is the whole configuration) since $P$ does not contain one and $r_1$ performs a radial movement.
Starting from a static initial configuration, once the configuration is guided, or once $CEB(P)=P$, all robots are static. Thus we have the following Lemma.
\begin{lemma}\label{lem:phase FCB2 terminates}
    After executing the \emph{FCB2} phase, the configuration is static and associated with either the \emph{FCB1} phase or the \emph{Almost Pattern Formation}.
\end{lemma}

\section{Omitted Proofs}\label{sec:omitted proofs}

\begingroup
\def\thelemma{\ref{lem:Termination terminates}}
\begin{lemma}[restated]
    The \emph{Termination} phase terminates on a configuration $P\approx F$.
\end{lemma}
\addtocounter{lemma}{-1}
\endgroup

\begin{proof}

The phase is executed when the configuration is ordered and every point of $F$ is occupied. If $F$ contains points of multiplicity, and if a robot $r$, distinct from $r_1$, is not located at a point of $F$, then $r$ is close to a point of $F$ \ie, there is a unique point $f\in F$ in its circle such that $ang(f,c(P), r)\leq \alpha(F)/4$. In this case, $r$ moves toward $f$ while remaining on its circle. So that in finite time $r_1$ is the only robot not located at a point of $F$. 

No we assume that $r_1$ is the only robot not located on its destination in $F$. Let $P' = P\setminus\{r_1\}$ and for a multiset $A$, let $U(A)$ the set obtained from $A$ by removing the multiplicity information. The configuration formed by the \emph{visible} robots $U(P')$ may be symmetric even if $P'$ is not. So there may be multiple possible destinations $p$ for $r_1$ such that $U(P')\cup \{p\}$ is similar to $U(F)$ (one can observe that all the possible destination are all at the same distance to the center). We have to show that the closest possible destination is the only remaining point in $F$ where a robot is missing \ie, one of the smallest point of $F$ in the partial ordering. If the smallest point of $F$ is a point of multiplicity, then $r_2$ is already located on it and so, from the definition of a guided configuration (the configuration was guided before a point of multiplicity appeared), $r_2$ is the unique robot that minimize the distance to the center of the configuration and such that 
\begin{equation}\label{eq:apx:condition 3 for a guided configuration}
    2ang(r_1,c(P), r_{2}) < \min_{f\neq f_1,\,|f|=|f_2|} ang(f_2, c(F), f)
\end{equation}
So that $r_1$ can locate $r_2$ and move toward it. Otherwise, if the smallest point $f_1$ of $F$ is isolated, $r_1$ can still detect the position of $r_2 = f_2$. Moreover, by definition of the partial ordering, $ang(f_1, c(P), f_2)$ is minimal so that the destination of $r_1$ is the destination closest to $r_2$. Also, since $f_1$ and $r_1$ are on the same side of the half line $[c(P), r_2)$ (because their positions relative to the half line depend on the orientation of the guided configuration) and since all the possible destinations for $r_1$ are on the same circle, then the closest destination to $r_2$ is also the closest destination to $r_1$ (see Figure~\ref{fig:possible location of f1}). While moving toward its destination, $r_1$ still satisfy condition (\ref{eq:apx:condition 3 for a guided configuration}) so that $r_1$ still detect $r_2$ in the next activation.

 \begin{figure}\centering
       \begin{tikzpicture}[every path/.style={>=stealth}, scale=1.1]
 \begin{scope}
\clip (0:1.5) arc (0:-40:1.5) -- (0:0);
\fill [blue!40] (-5,-5) rectangle (5,5);
\end{scope} 
 \begin{scope}
\clip (240:1.5) arc (240:40:1.5) -- (0:0);
\fill [red!40] (-5,-5) rectangle (5,5);
\end{scope} 

         \begin{scope}
        \draw[dashed] (180:1.5) arc (180:-180:1.5);
        
        \draw[dashed] (0:0) -- (0:1.5);
        \draw[<-] (0:1.6) -- (0:3);
        \draw[dashed] (0:0) -- (-80:2.3);
        \draw[dashed] (0:0) -- (-40:3);
        \draw[<->] (0:1.9) arc (0:-80:1.9);
        \draw   (-60:2.1) node[opacity=0, text opacity=1] {$\theta$};        
        
         
        \draw[<->] (0:2.1) arc (0:-40:2.1);
        \draw   (-30:2.7) node[opacity=0, text opacity=1] {$\theta/2$};        
        
        \draw (0:0) node[minimum width=1.5pt,fill=black, label={180:$c(P)$}] {};
        \draw (-20:0.4) node[label={-20:$r_{1}$}] {};
        \draw (0:1.55) node[opacity=0, text opacity=1, label={20:$r_2 = f_{2}$}] {};
        \draw (0:1.5) node[fill=white] {};
        \draw (0:1.5) node[minimum width=2pt] {};
        \draw (-80:1.5) node[fill=white] {};
        \draw (-80:1.5) node[minimum width=2pt] {};
        \end{scope}

        \end{tikzpicture}
        \caption{\small The blue region represents the possible location of $f_1$. The red region represent the other possible locations of $f_1$ (based on the position of the visible robots $U(P)$).}\label{fig:possible location of f1}
    \end{figure}
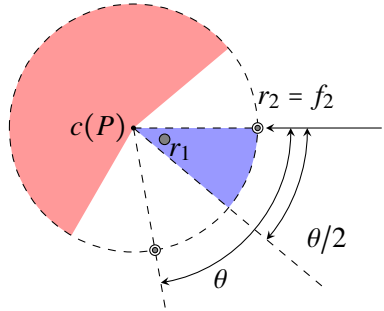
\end{proof}

\begingroup
\def\thelemma{\ref{lem:phase 3 terminates and C(P) is unchanged}}
\begin{lemma}[restated]
    After the \emph{Almost Pattern Formation} phase, each robot in $P'$ reaches its destination in $F'$, and the \emph{Termination} phase is executed.
\end{lemma}
\addtocounter{lemma}{-1}
\endgroup

\begin{proof}

After executing the three actions $cleanExterior(i)$, $locateEnoughRobots(i)$, and $removeRobotsInExcess(i)$ for a given $i$, we have $m_i$ robots on $C_i$ and $|interior(C_{i}) \cap P'| = |interior(C_{i-1}) \cap P'| - m_i = \sum_{j=i+1} ^{m}m_j$, so that we can execute the same procedure with $i+1$. If $i<m$ it is important to observe that some robots (those ordered to move in the last two cases) may still be in movement, but since they are now strictly between $C_i$ and $C_{i+1}$, they receive a new order with deterministic destination when executing the procedure with $i+1$. Hence, at the end of the procedure with $i = m$, all robots are static. 

After executing those actions for $i=1,2, \ldots, m$, each circle contains the proper number of robots \ie,  there are $m_i$ robots on $C_i$.

Then, we have to prove that when reaching their final destinations on their circles, the robots does not create a deadlock and does not change $C(P)$. First we show that there is no deadlock. Suppose we have on a circle $C$, $m$ robots $r_1 < r_2 < \ldots < r_m$ and $m$ destinations $d_1 < d_2 < \ldots <  d_m$. $r_i$ has destination $d_i$ and moves toward it (staying on $C$) in the direct orientation if $r_i < d_i$ and in the indirect orientation otherwise. For the sake of contradiction, suppose that $r_i < d_i$ and $r_i$ cannot reach $d_i$, even after an infinite number of activation. we observe that, to block $r_i$, $r_{i+1}$ must satisfy $r_i< r_{i+1} \leq d_1$ and $r_{i+1}$ is not able to free the way for $r_{i+1}$. This implies that $r_{i+1}\leq r_{i+3}\leq d_1$. Recursively, this means that $r_m \leq d_1$. But nothing blocks $r_m$ to reach $d_m$ when $r_m\leq d_m$, a contradiction.

Now suppose for the purpose of contradiction that $C(P)$ is modified. This means that there exist two robots $r$ and $r'$ on $C_1$ that form an angle greater than $\pi$. 
Before $C(P)$ is modified, they form an angle of at most $\pi$, so that one robot's movement on $C_1$ in the direct orientation, and the other's movement on $C_1$ in the indirect orientation. This is possible only if there is no point in $F$ on $C_1$ between $r$ and $r'$, which is a contradiction with the fact that $C(P)=C(F)$.
\end{proof}

\begingroup
\def\thelemma{\ref{lem:robot election}}
\begin{lemma}[restated] The following properties hold: \emph{(i)} eventually one robot is aware it is elected with probability one, and \emph{(ii)} once a robot is aware it is elected, another robot cannot be elected.
\end{lemma}
\addtocounter{lemma}{-1}
\endgroup
\begin{proof}
    \emph{i)} Suppose first that robots always reach their destination. 
    Initially the configuration is in a state where there are at least two robots whose location (or destination for those that are moving) are the closest to the center. 
    Let $r$ be the first (or one of the first) activated robot among them. With probability $1/2$, it chooses to move toward the center. Let $r'$ be another robot among them. If $r'$ is activated after $r$ begins its movement, $r'$ does not move, otherwise it moves away from the center with probability $1/2$. So with probability greater than $1/2^n$, \emph{regardless of the scheduler's choices}, $r$ is the only robot to move toward the center. After that, if another robot is activated before the next activation of $r$, it does not move. At the next activation of $r$, with probability $1/2$, $r$ chooses to move toward the center and becomes elected (and $r$ is aware it is elected when it is next activated since the other robots are static). So, with probability greater than $1/2^{n+1}$, $r$ is aware it is elected.
    If this does not happen, \ie, if the first or the second choice of $r$ is to move away the center or if another robot chose to move toward the center, then the configuration gets back to its initial state. So, since the probability of success is constant (for a given $n$), eventually a robot is aware it is elected with probability one.  
    Now, if robots do not always reach their destinations, the probability that $r$ moves by a distance $d$ toward the center (to become elected), is $1/2^{\left\lceil\frac{d}{\delta}\right\rceil}$, instead of $1/4$ in the first case. Indeed, instead of choosing two times to move toward the center, now $r$ needs to choose $\left\lceil\frac{d}{\delta}\right\rceil$ times to move toward the center. The probability that one of the other robot chooses to move away is still $1/2$. So that there is again a non null probability (greater than $1/2^{n+\left\lceil\frac{d}{\delta}\right\rceil}$) that a robot is elected, which implies that eventually a robot is aware it is elected with probability one.
    
    \emph{ii)} Once a robot $r_e$ is elected, the other robots are currently either moving away, not moving, or moving toward the center by a distance at most the eighth of their distance to the center. In each case, when another robot looks again (after it finishes its movement) it sees that the robot $r_e$ is the only closest robot to the center, and then it chooses not to move. Indeed, if another robot $r$ is moving toward the center, it is by a distance at most $|r|/8$. Since we have $|r_e| < \frac{7}{8}|r|$ when $r$ started its movement, we have $|r_e|<|r|$ after $r$ finishes it. After $r$'s movement, $r_e$ is maybe no longer elected, but since $r_e$ was aware it was elected, it already chooses to move on its circle to create a $1/8$-shifted-regular set (and it moves by a non null distance, so that in the next look phase, it is shifted).
\end{proof}

\begingroup
\def\thelemma{\ref{lem:phase FCB1 terminates}}
\begin{lemma}[restated]
    After executing the \emph{FCB1} phase, the configuration is static and associated with either the \emph{Termination} phase or the \emph{Almost Pattern Formation} (in finite time with probability one).
\end{lemma}
\addtocounter{lemma}{-1}
\endgroup

\begin{proof}
We proved in Lemma~\ref{lem:robot election} that a unique robot is elected with probability one. During the robot election the configuration still has the same CEB-set and we cannot have $|r_1| \leq |r_2|/2$ so that the configuration cannot be guided and remains associated with the FCB1 phase. Once a robot is elected, it is shifted. When a robot $r$ is shifted, no robot is located to a point (or has a destination) $p$ such that $|p| = |r|$, so that every other robot in the CEB-set receives a new destination when it sees that the configuration contains a CEB-set with a shifted angle $\theta/8$. The shifted robot increases the shift angle from $\theta/8$ to $\theta/4$ only when all the other robots in the CEB-set have reached their destinations. Once the shift angle is greater than $\theta/8$ all the robots are static. When the shifted robot has a shift angle of $\theta/4$, it moves toward the center and the other robots remain static. When the shifted robot $r$ stops \ie, when $|r| = \min(|r_2|, |f_2|)/2$, the configuration is guided and all the robots are static.
\end{proof}



\subsection{Proof of Theorem \ref{thm:sym conf have a CEB}}
The following algorithm presents the construction of the CEB-set when the configuration $P$ is regular or sym-regular (but not equiangular or biangular)
\begin{algorithm}[h]\small
    \SetAlgoRefName{constructCEBSet}
    \caption{}
    \label{algo:constructCEBSet}
    $Q = \emptyset$\quad $Ignore\leftarrow\emptyset$\\
    \While{$P\setminus (Ignore\cup Q)\neq\emptyset$}{
    Let $S$ be the set of smallest robots in $P\setminus (Ignore\cup Q)$ according to the partial ordering of robot\\
    \textbf{if} $Q\cup S$ holds $P$ \textbf{then} $Ignore\leftarrow Ignore\cup S$\\
    \textbf{else if }$Q\cup S$ is equiangular or biangular \textbf{then} $Q\leftarrow Q\cup S$\\
    \textbf{else} stop
    }
\end{algorithm}
\begingroup
\def\thetheorem{\ref{thm:sym conf have a CEB}}
\begin{theorem}[restated]
     For a configuration $P$, if $\rho(P)>1$, then $P$ has a unique, non-empty, CEB-set.
\end{theorem}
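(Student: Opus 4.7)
The plan is to handle uniqueness and non-emptiness separately. Uniqueness is immediate: the set $S$ selected at each iteration of Algorithm constructCEBSet is the set of minimal elements of a partial order defined without reference to any observer's coordinate system, and both predicates tested in the loop --- ``$Q\cup S$ holds $P$'' and ``$Q\cup S$ is equiangular or biangular'' --- are properties of the underlying geometry that are invariant under rotations, reflections, and translations. Hence the output $Q$ is a function of $P$ alone.

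For non-emptiness I would start from the observation that $\rho(P)>1$ means at least two coordinate systems lie in $Z_{\min}$ and produce the same sorted sequence of coordinates. Composing such a pair yields a non-trivial isometry of the plane that fixes $c(P)$ and preserves $P$. The full set of such isometries forms a finite subgroup $G\subseteq O(2)$, hence either cyclic $C_k$ or dihedral $D_k$. The $G$-orbits partition $P$, each orbit lies on a single circle around $c(P)$, and a short case analysis on finite subgroups of $O(2)$ shows that every such orbit is either equiangular (a regular polygon, when $G$ acts by rotations or the point is fixed by a reflection) or biangular (two alternating angular gaps, when $G$ is dihedral and the point is generic). In particular, if $P$ itself is equiangular or biangular the second clause of the CEB definition gives $CEB(P)=P$ and we are done; otherwise I focus on the innermost $G$-orbit $S_1$, i.e., the orbit at distance $\min_{r\neq c(P)}|r|$, which coincides with the minimal level of the partial order and is, by the previous step, equiangular or biangular.

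In the generic situation $S_1$ lies strictly inside $C(P)$, so removing any subset of $S_1$ cannot shrink the smallest enclosing circle; $S_1$ does not hold $P$, the algorithm sets $Q\leftarrow S_1$, and a non-empty CEB is produced immediately. The delicate case --- and the one I expect to be the main obstacle --- is when every robot of $P$ already lies on $C(P)$: then $S_1$ sits on the enclosing circle and may indeed hold $P$, forcing the algorithm to push it into $Ignore$ and continue. I would handle this by induction on the number of orbits processed, showing that some later orbit $S_i$ must fail to hold $P$. The key observation is that removing a $G$-orbit from $P$ leaves a $G$-symmetric set, and because $G$ is non-trivial such a set cannot be contained in a strict semicircle, so the remaining points still pin down $C(P)$. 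Making this rigorous requires a careful classification of how $C_k$- and $D_k$-orbits can coexist on a common circle together with a sharper analysis of which orbits are truly ``essential'' for the enclosing circle; that combinatorial case analysis is where essentially all the real work lies.
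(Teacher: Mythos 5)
There is a genuine gap, and you have located it yourself: the ``delicate case'' in which the innermost orbit lies on $C(P)$ is precisely where all of the content of the theorem sits, and your proposal leaves it as an unproven ``combinatorial case analysis''. Worse, the one concrete tool you offer for that case is false. You claim that removing a $G$-orbit leaves a $G$-symmetric set which, because $G$ is non-trivial, cannot lie in an open semicircle and hence still pins down $C(P)$. This is true when $G$ contains a non-trivial rotation (a full rotation orbit on $C(P)$ is an antipodal pair or an inscribed regular $k$-gon), but it fails when $G$ consists only of the identity and one reflection, i.e., exactly in the sym-regular-but-not-regular case. For instance, take $n=5$ points on a circle: one at the top of the vertical axis (angle $90^\circ$) and two mirror pairs clustered near the bottom, at angles $270^\circ\pm5^\circ$ and $270^\circ\pm10^\circ$. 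All angular gaps are below $\pi$, so the smallest enclosing circle is the circle itself, and the configuration is mirror-symmetric; yet removing the one-point orbit at $90^\circ$ leaves four points inside a $20^\circ$ arc, so that orbit does hold $C(P)$ and the remaining mirror-symmetric set sits comfortably inside an open semicircle. So the induction you sketch cannot be driven by that observation, and reflection orbits holding $C(P)$ are a real phenomenon, not a degenerate corner.

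The paper closes this case by a different, counting argument rather than by showing each orbit is inessential: in the sym-regular, non-regular situation the successive minimal sets $S$ handled by the construction contain only one or two robots, so with $n\geq 5$ the loop treats at least three of them, and since one cannot have three pairwise disjoint subsets of $P$ that each hold $C(P)$ (the circle is already determined by at most three points of $P$), at least one such $S$ fails to hold $P$ and is added to $Q$. In the purely rotational (regular) case the paper simply observes that the first minimal set does not hold $P$ --- there your semicircle argument is sound. Your uniqueness argument and the easy cases (whole configuration equiangular/biangular, innermost orbit strictly inside $C(P)$) match the paper, but as it stands the proposal does not prove the theorem: it is missing the argument (or any valid substitute) for exactly the situation your own text flags as ``where essentially all the real work lies''.
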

\addtocounter{property}{-1}
\endgroup
\begin{proof}
    If $\rho(P)>1$, then the configuration is regular or sym-regular. If $P$ is  biangular or equiangular, then $CEB(P) = P$. 
    Otherwise, if $P$ is regular, then the first set $S$ in the construction of $Q$ (algorithm \ref{algo:constructCEBSet}) does not hold $P$, so $S$ is added to $Q$, which becomes non-empty. If $P$ is sym-regular and not regular, then the set $S$ of smallest robots always contains one or two robots. Since $n\geq 5$, the while loop is executed at least three times. As we cannot have three disjoints subset of $P$ that hold $P$, $Q$ is not empty.
\end{proof}

\subsection{Uniqueness of the shifted robot}

Let $P$ be a $n$-robot configuration with $n\geq 5$. $\theta(P)$ is the minimum angle between two robots in $P$, centered at the center $c(P)$. We show that, for a shifted robot to be unique, it is sufficient that the shift angle is at most $\theta(P)/2$, and the orientation of the shift (the inverse of the orientation where we have to rotate the robot to create a CEB-set configuration) must reduce the minimum angle it forms with the other robots of the configuration. In the sequel, let $W(P)$ denote the position of the Weber point of set $P$.


\begin{lemma}\label{lem:property of the Weber point}
    Let $P$ be a $n$-robot configuration and $P' = P\setminus\{r\}\cup \{r'\}$. If $\overrightarrow{u} = \overrightarrow{W(P')W(P)}$, then for any point $p\in P\setminus\{r\}$, we have:
    \[
        \cos\left(\ang\left(\overrightarrow{u}, \overrightarrow{W(P)r}\right)\right) - \cos\left(\ang\left(\overrightarrow{u}, \overrightarrow{W(P')r'}\right)\right) \geq \cos\left(\ang\left(\overrightarrow{u}, \overrightarrow{W(P')p}\right)\right) - \cos\left(\ang\left(\overrightarrow{u}, \overrightarrow{W(P)p}\right)\right)
    \]
\end{lemma}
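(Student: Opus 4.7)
The natural tool is the first-order optimality condition for the Weber point: $W(P)$ is the minimizer of $f_P(x) = \sum_{q \in P}|x-q|$, so (assuming $W(P)$ lies at none of the points of $P$) the gradient vanishes there, which reads
\[
\sum_{q \in P} \hat{e}_q(W(P)) \;=\; 0, \qquad \text{where } \hat{e}_q(x) := \frac{q-x}{|q-x|}.
\]
The same holds at $W(P')$. Subtracting the two identities and isolating the terms that change when we swap $r$ for $r'$ gives
\[
\hat{e}_r(W(P)) - \hat{e}_{r'}(W(P')) \;=\; \sum_{q \in P\setminus\{r\}}\bigl[\hat{e}_q(W(P')) - \hat{e}_q(W(P))\bigr].
\]
Taking the inner product with $\vec{u}$ and noting that $\vec{u}\cdot\hat{e}_q(x)=|\vec{u}|\cos(\ang(\vec{u},\overrightarrow{xq}))$, the right-hand side of the displayed inequality of the lemma is exactly the summand at $q=p$, while the left-hand side is the whole sum. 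Thus the lemma reduces to showing that every summand is nonnegative:
\[
\vec{u}\cdot\hat{e}_q(W(P'))\;\geq\;\vec{u}\cdot\hat{e}_q(W(P)) \quad \text{for all } q\in P\setminus\{r\}.
\]

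To establish this monotonicity, parametrize the segment by $x(t)=W(P')+t\vec{u}$, $t\in[0,1]$, so that $x(0)=W(P')$ and $x(1)=W(P)$, and consider $g(t) := \vec{u}\cdot\hat{e}_q(x(t))$. A direct computation using $\frac{d}{dt}\frac{v}{|v|}=\frac{1}{|v|}\bigl(\dot v - \hat v(\hat v\cdot \dot v)\bigr)$ with $v(t)=q-x(t)$ and $\dot v=-\vec{u}$ yields
\[
g'(t) \;=\; -\frac{1}{|q-x(t)|}\,\bigl|\vec{u} - \hat{e}_q(x(t))\bigl(\hat{e}_q(x(t))\cdot \vec{u}\bigr)\bigr|^{2} \;\leq\; 0,
\]
since the bracket is the component of $\vec{u}$ orthogonal to $\hat{e}_q(x(t))$. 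Hence $g$ is non-increasing on $[0,1]$, in particular $g(0)\geq g(1)$, which is exactly the desired inequality.

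Combining the two ingredients: the identity says the quantity $\vec{u}\cdot[\hat{e}_r(W(P))-\hat{e}_{r'}(W(P'))]$ equals a sum of nonnegative terms, one of which is $\vec{u}\cdot[\hat{e}_p(W(P'))-\hat{e}_p(W(P))]$; dividing through by $|\vec{u}|$ (and treating the trivial case $\vec{u}=0$ separately, where both sides vanish) gives the stated cosine inequality.

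The only subtle step is the monotonicity argument; once one recognises $g'(t)$ as the negative squared norm of the projection of $\vec{u}$ onto the tangent direction to the unit sphere at $\hat{e}_q(x(t))$, it is immediate. The remaining care is in handling the degenerate situation where a Weber point coincides with a data point, but under the standing assumption $n\geq 5$ with no coincident robot positions and not all robots collinear, $W(P)$ and $W(P')$ are interior critical points of $f_P$ and $f_{P'}$, so the gradient formulation is valid.
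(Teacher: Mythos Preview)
Your proof is correct and follows essentially the same route as the paper: subtract the two first-order optimality conditions for the Weber point, project onto $\vec{u}$, and use that each summand on the right-hand side is nonnegative. In fact you supply more detail than the paper does, since the paper merely asserts the monotonicity $\cos(\ang(\vec{u},\overrightarrow{W(P')p}))\geq\cos(\ang(\vec{u},\overrightarrow{W(P)p}))$ as something ``one can observe,'' whereas your computation of $g'(t)\leq 0$ (equivalently, convexity of $x\mapsto|q-x|$ along the segment) actually proves it.
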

\begin{proof}
    It is known that $W(P)$ is such that
    \[
        \sum_{p\in P} \overrightarrow{v_p}=0
    \]
    where $\overrightarrow{v_p}$ is the normalized vector $\overrightarrow{v_p} = \frac{\overrightarrow{W(P)p}}{|W(P)-p|}$, or any vector such that $||v_p|| \leq 1$ if $p=W(P)$.
    The equality is true by taking only the $\overrightarrow{u}$ component of each vector, which implies
    \[
        \sum_{p\in P} \cos\left(\ang\left(\overrightarrow{u}, \overrightarrow{W(P)p}\right)\right)=0
    \]
    Where we assume that $\cos\left(\ang\left(\overrightarrow{u}, \overrightarrow{W(P)p}\right)\right)$ can be any number in $[-1,1]$ if $p=W(P)$. The same equality holds in $P'$. By subtraction, we get:
    \begin{equation}\label{eq:equality between sum of consines}
        \cos\left(\ang\left(\overrightarrow{u}, \overrightarrow{W(P)r}\right)\right) - \cos\left(\ang\left(\overrightarrow{u}, \overrightarrow{W(P')r'}\right)\right) = \sum_{p\in P\setminus\{r\}} \cos\left(\ang\left(\overrightarrow{u}, \overrightarrow{W(P')p}\right)\right) - \cos\left(\ang\left(\overrightarrow{u}, \overrightarrow{W(P)p}\right)\right)
    \end{equation}
    One can observe that, for every point $p\in P\setminus\{r\}$, we have:
    \[
        \cos\left(\ang\left(\overrightarrow{u}, \overrightarrow{W(P')p}\right)\right) \geq 
        \cos\left(\ang\left(\overrightarrow{u}, \overrightarrow{W(P)p}\right)\right)
    \]
    so that each term in the sum of the right hand side of equation (\ref{eq:equality between sum of consines}) is positive, which implies that the left hand side is not smaller than each term in the right hand side, Q.E.D.
    
\end{proof}
\begin{lemma}\label{lem:angle of robots are less that angle of the moving robot}
    Let $P$ be a $n$-robot configuration and $P' = P\setminus\{r\}\cup \{r'\}$, with $\theta = \ang_{\min}(r,W(P),r') \leq \theta(P)$, and $|r'|_{W(P)} = |r|_{W(P)} = \min_{p\in P}|p|_{W(P)}$.
    We have $\forall p\in P$, $ang_{\min}(W(P), p, W(P')) < \theta$.
\end{lemma}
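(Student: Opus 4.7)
The plan is to convert the cosine-difference inequality of Lemma~\ref{lem:property of the Weber point} into the required angular bound $\ang_{\min}(W(P),p,W(P'))<\theta$, via an intermediate metric bound on the Weber-point displacement $|W(P)-W(P')|$.

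Write $q=W(P)$, $q'=W(P')$, $L=|q-q'|$, $D=|r|_q=|r'|_q$, and $\vec u=\overrightarrow{q'q}$. For any point $x$ set $\alpha_x=\ang(\vec u,\overrightarrow{qx})$ and $\alpha'_x=\ang(\vec u,\overrightarrow{q'x})$. A preliminary coordinate computation (placing $q$ at the origin and $q'$ at $(-L,0)$, writing $p=(x_p,y_p)$) yields the clean identity
\[
\sin\beta_p \;=\; \frac{L\,|y_p|}{|p-q|\cdot|p-q'|}, \qquad \beta_p := \ang_{\min}(q,p,q'),
\]
which is the vehicle for turning metric bounds on $L$ into angular bounds on $\beta_p$.

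Step~1: bound $L$. Applying Lemma~\ref{lem:property of the Weber point} to each $p\in P\setminus\{r\}$ gives $\cos\alpha'_p-\cos\alpha_p\leq\cos\alpha_r-\cos\alpha'_{r'}$. Expanding with $\cos a-\cos b=-2\sin\tfrac{a+b}{2}\sin\tfrac{a-b}{2}$ and using $|r|_q=|r'|_q=D$ together with $\ang(r,q,r')=\theta$, the right-hand side is controlled by a constant multiple of $\sin(\theta/2)$. Combining these pointwise inequalities with the Weber equilibrium $\sum_{p\in P'}(p-q')/|p-q'|=\vec 0$ projected onto $\vec u$ then yields a displacement bound of the form $L\leq c\cdot D\sin(\theta/2)$ for a small constant $c$.

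Step~2: conclude for $p\neq r$. Substituting the bound on $L$ into the preliminary identity, using $|p-q|\geq D$, $|p-q'|\geq D-L$, and $|y_p|\leq|p-q|$, a one-variable optimization (the maximum of $L|\sin\phi|/\sqrt{D^2+2DL\cos\phi+L^2}$ over $\phi$ is attained at $\cos\phi=-L/D$ and equals $L/D$) shows $\sin\beta_p\leq L/D$. The hypothesis $\theta\leq\theta(P)$ keeps $\theta$ in a range (at most $2\pi/5$ when $n\geq 5$) where the metric bound from Step~1 gives $L/D<\sin\theta$, and hence $\beta_p<\theta$. Step~3 handles $p=r$, which Lemma~\ref{lem:property of the Weber point} does not cover: a direct triangle argument in $r,q,q'$ using $|r-q|=D$ yields the same estimate $\sin\beta_r\leq L/D$ and closes the case.

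The main obstacle will be Step~1: extracting a bound on $L$ sharp enough to produce the strict inequality needed in Step~2. The naive estimate $L\leq 2D\sin(\theta/2)$ is not quite sufficient because $2\sin(\theta/2)\not<\sin\theta$. The proof must therefore exploit the hypothesis $\theta\leq\theta(P)$ more carefully — in particular, that no other robot of $P$ lies inside the angular sector $\angle rqr'$ — to sharpen the cosine inequalities of Lemma~\ref{lem:property of the Weber point} into strict ones, yielding an improved bound of the form $L<D\sin\theta$, from which the strict conclusion $\beta_p<\theta$ for every $p\in P$ follows.
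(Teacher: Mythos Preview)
Your proposal has a genuine gap: Step~1 is never actually carried out. You write that ``combining these pointwise inequalities with the Weber equilibrium \ldots\ yields a displacement bound of the form $L\leq c\cdot D\sin(\theta/2)$ for a small constant $c$,'' but you neither identify $c$ nor show how it arises. In the final paragraph you admit that the only concrete bound in sight, $L\leq 2D\sin(\theta/2)$, is \emph{too weak} for Step~2 (since $2\sin(\theta/2)\geq\sin\theta$), and you close by saying the proof ``must therefore exploit the hypothesis $\theta\leq\theta(P)$ more carefully'' to obtain $L<D\sin\theta$. That is a description of what remains to be proved, not a proof. Nothing in your outline explains how the angular-separation hypothesis converts the pointwise cosine inequalities of Lemma~\ref{lem:property of the Weber point} into an improved \emph{global} bound on $L$; the suggestion that ``no other robot lies in the sector $\angle rqr'$'' does not obviously buy the factor $\cos(\theta/2)$ you need. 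Without this, Steps~2 and~3 have nothing to stand on.

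For comparison, the paper's argument never attempts to bound the displacement $L=|W(P)-W(P')|$ at all. Instead it argues by contradiction directly from the inequality of Lemma~\ref{lem:property of the Weber point}: assuming $\ang_{\min}(W(P),p,W(P'))\geq\theta$ for some $p$, it splits into three cases according to whether the arguments $\overrightarrow{\ang}_{W(P)}(r)$ and $\overrightarrow{\ang}_{W(P')}(r')$ straddle $\pi$ or not. Two cases are dispatched by the crude estimate $1-\cos\theta$ on each side; the remaining case is reduced (by monotonicity arguments in the auxiliary variables $l,\gamma',i$) to showing positivity of an explicit three-variable function $f_3(\tau,\gamma,\theta)$, which is then checked on the boundary of its domain. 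Your metric route via $L$ is a natural idea and would be more conceptual if it worked, but as written the crucial quantitative step is missing, and it is not clear it can be filled in without an analysis of comparable delicacy to the paper's case split.
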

\begin{proof}
\tolerance=1000
For this lemma, we assume that all angles are computed using a global orientation such that ${\ang_{\min}(r,W(P),r') = \ang(r,W(P),r')}$. Also we define $\overrightarrow{u} = \overrightarrow{W(P')W(P)}$ and $\overrightarrow{\arg}_p(r) = \arg(\overrightarrow{u}, \overrightarrow{pr})$

    From the previous lemma, we have:
    \begin{equation}\label{eq:difference in cosinus of the Weber Point full notation}
        \cos(\overrightarrow{\ang}_{W(P)}(r)) - \cos(\overrightarrow{\ang}_{W(P')}(r')) \geq \cos(\overrightarrow{\ang}_{W(P')}(p)) - \cos(\overrightarrow{\ang}_{W(P)}(p)) 
    \end{equation}
    With the notations of Figure~\ref{fig:proof angle of robot, configuration}:
    \begin{equation}\label{eq:difference in cosinus of the Weber Point}
        \cos(\tau) - \cos(\tau+\theta - \gamma) \geq \cos(\beta') - \cos(\beta' + \gamma') 
    \end{equation}

    We observe that $\pi \leq \overrightarrow{arg}_{W(P)}(r) < \overrightarrow{arg}_{W(P')}(r')$ is not possible, because this would mean that the half-lines $HL(W(P), r)$ and $HL(W(P'), r')$ intersect, which is not possible.
    
    We can have three cases:\\
    \noindent\textbf{The case $\overrightarrow{arg}_{W(P)}(r) \leq \pi < \overrightarrow{arg}_{W(P')}(r') $:} then 
    \[
        cos(\overrightarrow{arg}_{W(P)}(r)) - cos(\overrightarrow{arg}_{W(P')}(r')) \leq cos(\overrightarrow{arg}_{W(P)}(r)) + 1 < 1 - cos(\theta)
    \]
    On the other hand, if $arg_{\min}(W(P), p,W(P')) \geq \theta$, we have:
    \[
        \cos(\overrightarrow{arg}_{W(P')}(p)) - cos(\overrightarrow{arg}_{W(P)}(p) \geq 1 - cos(arg_{\min}(W(P), p,W(P'))) \geq 1 - cos(\theta)
    \]
    which contradicts Equation~\ref{eq:difference in cosinus of the Weber Point full notation}, then we have $arg_{\min}(W(P), p,W(P')) < \theta$.
    
    \noindent\textbf{The case $\overrightarrow{arg}_{W(P')}(r') < \pi < \overrightarrow{arg}_{W(P)}(r)$:} again, we have 
    \[
        cos(\overrightarrow{arg}_{W(P)}(r)) - cos(\overrightarrow{arg}_{W(P')}(r')) \leq 1 - cos(\overrightarrow{arg}_{W(P')}(r')) < 1 - cos(\theta)
    \]
    with the same argument, this implies that $arg_{\min}(W(P), p,W(P')) < \theta$.
    
    \begin{figure}\centering
        \includegraphics[width=10cm]{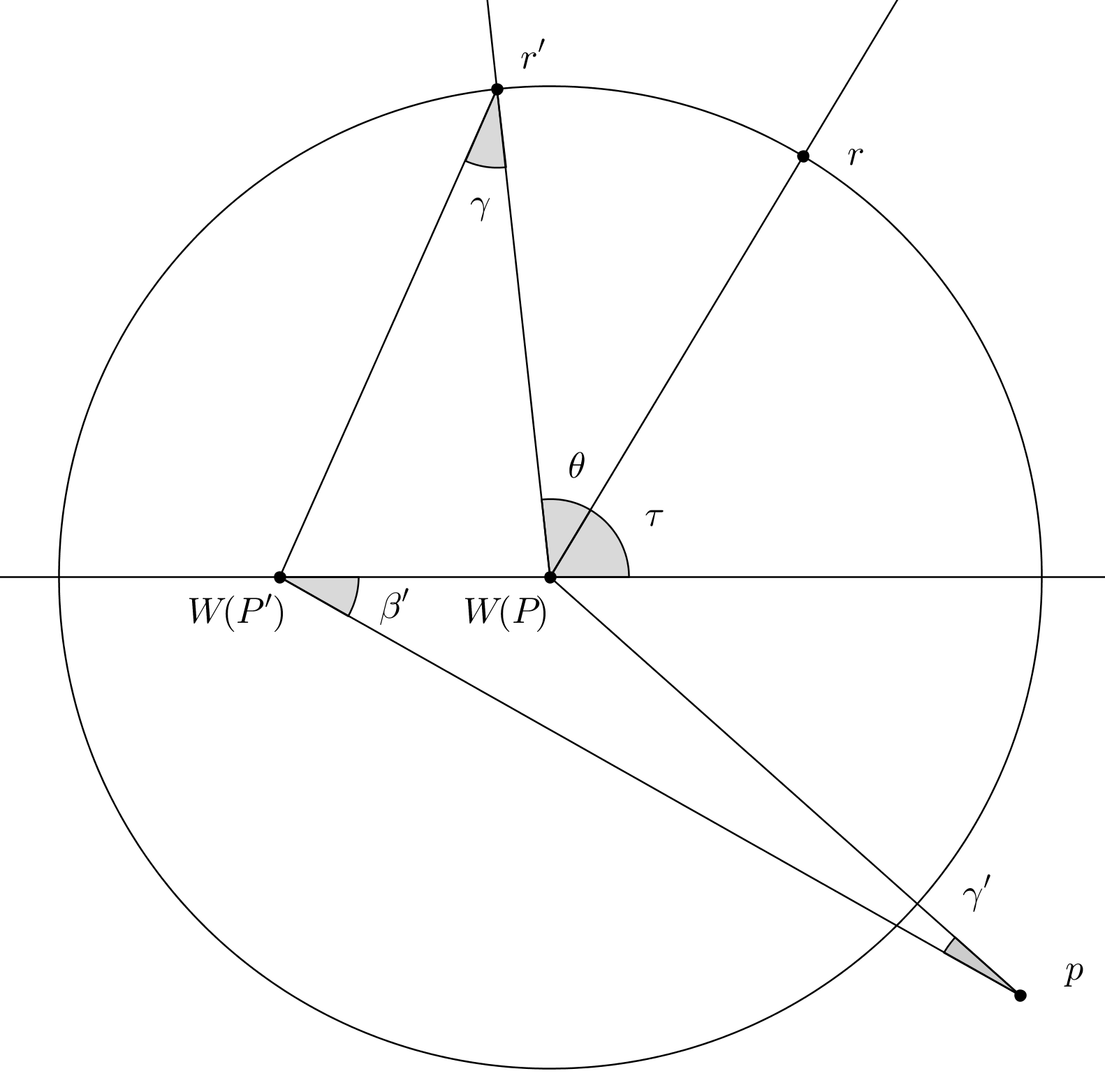}
        \caption{Part of the configuration $P$, for the proof that $\gamma'$ cannot be greater than $\theta$.}\label{fig:proof angle of robot, configuration}
    \end{figure}
    \noindent\textbf{The case $\overrightarrow{arg}_{W(P)}(r) < \overrightarrow{arg}_{W(P')}(r') \leq \pi $:}
    This case corresponds to the configuration shown in Figure~\ref{fig:proof angle of robot, configuration}. Using the notations of the figure, we want to prove that $\gamma' < \theta$ for any position of $r$, $r'$, $W(P)$, $W(P')$, and $p$ (for an arbitrary $p\in P\setminus\{r\}$) verifying the conditions of the lemma. Like in previous cases, we suppose that $\gamma' \geq \theta$ and we reach a contradiction.

For simplicity, we suppose that $|W(P) - r| = 1$ and we characterize the configuration using the angles $\tau$, $\gamma$, $\theta$, $\tau'$ and the distance $l = |W(P) - p|$ rather that the positions of the points. To this purpose, we start by giving the distance $\omega = |W(P) - W(P')|$ with respect to the angles $\tau$, $\gamma$, and $\theta$ using the law of sinus:
\[\omega(\tau, \gamma, \theta) = \frac{\sin(\gamma)}{\sin(\tau + \theta - \gamma)}\]
Then, the angle $\beta'$ is computed using the law of sinus:
\[
\beta'(\tau, \gamma, \theta, l, \gamma') = \beta_1'(\tau, \gamma, \theta, l, \gamma') =\arcsin\left(\frac{l\sin(\gamma')}{\omega(\tau, \gamma, \theta)}\right)\,\text{ or }\,
\beta'(\tau, \gamma, \theta, l, \gamma') = \beta_2'(\tau, \gamma, \theta, l, \gamma') =\pi - \beta_1'(\tau, \gamma, \theta, \gamma', l)
\] 
which exists only if $\omega(\tau, \gamma, \theta) \geq l\sin(\gamma')$.
Then, we define $f_6$:
\[f_6: \left(\tau, \gamma, \theta, l, \gamma', i\right) \mapsto cos\left(\beta_i'\left(\tau, \gamma, \theta, l, \gamma'\right)\right) - cos\left(\beta_i'\left(\tau, \gamma, \theta, l, \gamma'\right) + \gamma'\right) - \left(cos\left(\tau\right) - cos\left(\tau + \theta - \gamma\right)\right)\]
The domain of definition of $f_6$ is such that $\theta\in[0, \pi/4]$, $\gamma\in[0, \theta)$, $\tau\in[0,\pi - \theta]$, $l\geq 1$, $\gamma'\in[0, \pi/2]$, $i\in\{1,2\}$, and $\omega(\tau, \gamma, \theta) \geq l\sin(\gamma')$. In the sequel, except stated otherwise, $f_6(\tau, \gamma, \theta, \gamma', l, i)$ denotes the image by $f_6$ of an arbitrary tuple $(\tau, \gamma, \theta, \gamma', l, i)$ in its definition domain.
We want to show that if $\gamma' \geq \theta$,  then $f_6(\tau, \gamma, \theta, \gamma', l, i) > 0$, which contradicts Equation \ref{eq:difference in cosinus of the Weber Point}. So now, for the sake of contradiction we suppose that $\gamma' \geq \theta$. The following claims 1 and 3 prove the lemma.

\noindent\textbf{Claim 1:} if $\omega(\tau, \gamma, \theta) > 1$, then $f_6(\tau, \gamma, \theta, l, \gamma', i) > 0$.
First, if $\omega(\tau, \gamma, \theta) > 1$, we can use a simple lower bound:
\[
\cos(\beta_1'(\tau, \gamma, \theta, l, \gamma')  - \gamma') - \cos(\beta_1'(\tau, \gamma, \theta, l, \gamma')) \geq 1 -\cos(\gamma') > 1 -\cos(\theta) 
\]
Also, $\omega(\tau, \gamma, \theta) > 1$ implies that $\sin(\gamma) > \sin(\tau+\theta-\gamma)$, which in turn implies that $\gamma > \tau+\theta-\gamma$. Then we have:
\[
    \theta > \gamma > \tau+\theta-\gamma > \tau
\]
so that 
\[ 
    1 - \cos(\theta)\geq\cos(\tau) - \cos(\tau+\theta-\gamma) \Rightarrow f_6 (\tau, \gamma, \theta, l, \gamma', i)> 0 
\]

\noindent\textbf{Claim 2:} if $\omega(\tau, \gamma, \theta) \leq 1$, The minimum of $f_6$ is the minimum of $f_3: (\tau, \gamma, \theta) \mapsto f_6(\tau, \gamma, \theta, 1, \theta, 2)$.
We want to show that the minimum of $f_6$ is reached when $\gamma'$ is minimum (\ie, when $\gamma' = \theta$), when $l=1$, and with $i=2$.

Since $l > \omega(\tau, \gamma, \theta)$, then $\beta_1' = \beta_1'(\tau, \gamma, \theta, l, \gamma') \geq \gamma'$. First, we have:
\[
    \cos(\beta_2') - \cos(\beta_2' + \gamma') = \cos(\pi - \beta_1') - \cos(\pi - \beta_1' + \gamma') = \cos(\beta_1' - \gamma') - \cos(\beta_1')
\]
Since $\beta_1' \leq \pi/2$, from the concavity of the cosine in $[0, \pi/2]$ and, if  $\beta_1' + \gamma' > \pi/2$, from the symmetry of the cosine with respect to $\pi/2$, we have: 
\[
    \cos(\beta_1' - \gamma') - \cos(\beta_1') \leq \cos(\beta_1') - \cos(\beta_1' + \gamma')
\]
So that the minimum of  $f_6$ is obtained with $i=2$. 
Moreover, since
\[
\pi/2\geq\beta_1' \geq \beta'(\tau, \gamma, \theta, l, \theta)
\]
Then, again, by concavity of the cosine we have:
\begin{align*}
\cos(\beta_2') - \cos(\beta_2' + \gamma') = \cos(\beta_1' - \gamma') - \cos(\beta_1') &\geq \cos(\beta'(\tau, \gamma, \theta, l, \theta)  - \gamma') - \cos(\beta'(\tau, \gamma, \theta, l, \theta))\\
&\geq \cos(\beta'(\tau, \gamma, \theta, l, \theta)  - \theta) - \cos(\beta'(\tau, \gamma, \theta, l, \theta))
\end{align*}
Finally, again, by concavity of the cosine, since $\beta'(\tau, \gamma, \theta, l, \theta) \leq \beta'(\tau, \gamma, \theta, 1, \theta)$, we have:
\[
    \cos(\beta'(\tau, \gamma, \theta, l, \theta)  - \theta) - \cos(\beta'(\tau, \gamma, \theta, l, \theta)) \geq \cos(\beta'(\tau, \gamma, \theta, 1, \theta)  - \theta) - \cos(\beta'(\tau, \gamma, \theta, 1, \theta))
\]
So that $f_6(\tau, \gamma, \theta, l, \gamma', i) \geq f_6(\tau, \gamma, \theta, 1, \theta, 2) = f_3(\tau, \gamma, \theta)$. 

\noindent\textbf{Claim 3:} if  $\omega(\tau, \gamma, \theta) \leq 1$, then $f_3(\tau, \gamma, \theta) > 0$.
Let $f_{\tau} : \tau \mapsto f_3(\tau, \gamma, \theta)$ for fixed $\gamma$ and $\theta$. By analyzing Function $f_{\tau}$, we observe that $f_{\tau}$ is increasing and then decreasing, so that the minimum of $f_{\tau}$ is obtained with the greatest or smallest possible value of $\tau$. For fixed $\theta \leq\pi/4$, and $\gamma \leq \theta$, we must have $\tau \leq \pi - \theta$ and $\tau \geq 2\gamma - \theta$ (because $\omega(\tau, \gamma, \theta) \leq 1$).

On the one hand, we have (where $\gamma$ and $\theta$ are chosen in the definition domain of $f_3$):
\[
    \forall\gamma, \forall\theta,\quad f_3(\pi - \theta, \gamma, \theta) = 1 - cos(\theta) + cos(\theta) - cos(\gamma) > 0
\]
and on the other hand, we have:
\[
    \forall\gamma,\forall \theta,\quad f_3(2\gamma - \theta, \gamma, \theta) = 1 - cos(\theta) - cos(2\gamma - \theta) + cos(\gamma) > 0
\]
    Where the last inequality is true since $cos(\gamma) - cos(\theta) > 0$.

\end{proof}

\begin{lemma}\label{lem:the ordering is invariant by shift} 
Let $P$ be a $n$-robot configuration. Suppose the robots are indexed in the clockwise order $r_1$, $r_2$, $\ldots$, $r_n$ around the Weber point $W(P)$. Let $P' = P\setminus\{r\}\cup \{r'\}$, with $ang_{\min}(r,W(P),r') \leq \theta(P)/2$ and $|r'|_{W(P)} = |r|_{W(P)} = \min_{i\in[1,n]}|r_i|_{W(P)}$.
    Then, the robots in $P'$ are ordered in the same way as in $P$ around $W(P')$ (with $r'$ instead of $r$).
\end{lemma}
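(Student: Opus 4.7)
The plan is to turn the hypothesis $\ang_{\min}(r, W(P), r') \leq \theta(P)/2$ into a pointwise bound on the angular position of each robot around the Weber point, using the previous lemma, and then to argue that these bounds are too small to induce any swap in the cyclic ordering of $P$, whose angular gaps are all at least $\theta(P)$.

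First I would apply the previous lemma with shift angle $\theta := \ang_{\min}(r, W(P), r')$, noting that $\theta \leq \theta(P)/2 \leq \theta(P)$ so its hypotheses are satisfied: it gives $\ang_{\min}(W(P), p, W(P')) < \theta$ for every $p \in P$. Applied again with the roles of $P$ and $P'$ (and of $r$ and $r'$) exchanged --- which is legitimate since $r$ and $r'$ play symmetric roles, both lying on the minimum-distance circle around the Weber point --- it yields $\ang_{\min}(W(P'), r', W(P)) < \theta$. Since the parallax at a point $p$ is exactly the angle between the rays $\overrightarrow{W(P)\,p}$ and $\overrightarrow{W(P')\,p}$, these two inequalities translate into the key fact: for every $p \in (P\setminus\{r\})\cup\{r'\}$, the angular position of $p$ around the Weber point changes by strictly less than $\theta$ when one replaces $W(P)$ by $W(P')$.

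Next I would dispatch the pairs entirely contained in $P\setminus\{r\}$. Any two such robots $p, q$ have angular separation around $W(P)$ at least $\theta(P) \geq 2\theta$, and each of them is shifted by less than $\theta$ when passing to $W(P')$, so their new angular separation is still strictly positive with the same orientation. Iterating over consecutive pairs, the cyclic order of $P\setminus\{r\}$ around $W(P')$ coincides with the order inherited from $P$.

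The remaining and hardest step is to show that $r'$ is inserted into the correct gap of this cyclic order, namely between $r_{i-1}$ and $r_{i+1}$ where $r = r_i$. The hypothesis gives $|\ang_{\min}(r, W(P), r')| \leq \theta$, and the parallax bound contributes another $\theta$ for the change of Weber point, so a naive triangle-inequality estimate only gives up to $3\theta$ of angular displacement for $r'$ around $W(P')$ relative to the original position of $r$ around $W(P)$ --- strictly more than the $2\theta$ gap to the neighbours, and therefore too weak. The main obstacle will be to exploit the correlation between the two contributions: the tangential shift $r \to r'$ on the minimum-distance circle and the resulting Weber-point displacement $\overrightarrow{W(P)\,W(P')}$ are dictated by the same perturbation of the Weber equations, and $W(P')$ is forced to move in a direction roughly aligned with $r'-r$, so that the parallax of $r'$ partially cancels the tangential shift and the effective angular displacement of $r'$ around $W(P')$ stays strictly below $\theta$. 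I would try to make this rigorous either by a direct computation with the two Weber-point equations, in the spirit of the proof of the previous lemma (controlling the sign of $\overrightarrow{\ang}_{W(P')}(r') - \overrightarrow{\ang}_{W(P)}(r)$ through the angle between $\vec{u} = \overrightarrow{W(P')W(P)}$ and $r'-r$), or by a continuous deformation in which $r$ sweeps along the arc of the minimum-distance circle from $r$ to $r'$: a cyclic-order change would require two robots to become collinear with the current Weber point, and the parallax bound applied at each intermediate time, with cumulative shift always bounded by $\theta(P)/2$, rules this out throughout the deformation.
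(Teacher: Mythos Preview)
Your strategy coincides with the paper's on the easy part: measure angles from the fixed reference direction $\vec u=\overrightarrow{W(P')W(P)}$, use Lemma~\ref{lem:angle of robots are less that angle of the moving robot} to bound the parallax of every $p\in P\setminus\{r\}$ by $\theta\le\theta(P)/2$, and conclude that consecutive pairs in $P\setminus\{r\}$ (whose gap is at least $\theta(P)$) cannot swap. That part is fine and is exactly what the paper does.

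Two remarks on the rest. First, your appeal to Lemma~\ref{lem:angle of robots are less that angle of the moving robot} with the roles of $P$ and $P'$ exchanged is not justified: the hypothesis ``$|r|=|r'|=\min$'' is stated relative to $W(P)$, and after the swap you would need it relative to $W(P')$, which you have not established. The paper never performs this swap, and you do not need it either.

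Second, for the insertion of $r'$ you correctly diagnose that a symmetric $3\theta$ bound is too weak and that one must exploit the sign of $\overrightarrow{\ang}_{W(P')}(r')-\overrightarrow{\ang}_{W(P)}(r)$. This is precisely what the paper does, and it does so without any further computation: fixing the orientation so that the shift $r\to r'$ increases the $\vec u$-argument, the case analysis inside the proof of Lemma~\ref{lem:angle of robots are less that angle of the moving robot} already yields the one-sided inequality $\overrightarrow{\arg}_{W(P)}(r)<\overrightarrow{\arg}_{W(P')}(r')$ (the half-lines $W(P)\to r$ and $W(P')\to r'$ cannot cross), together with the coarser upper bound $\overrightarrow{\arg}_{W(P')}(r')<\overrightarrow{\arg}_{W(P)}(r)+\theta(P)$. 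These two, plugged into the same three-line chain you used for the other robots, slot $r'$ between $r_{i-1}$ and $r_{i+1}$. So your option~(a) is the right one, and the ``correlation'' you are after is exactly this one-sided sign inequality --- it is already available from the previous lemma and does not require a fresh Weber-equation computation or your continuous-deformation alternative~(b).
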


\begin{proof}
Using the same notation as the previous lemma, we define $\overrightarrow{u} = \overrightarrow{W(P')W(P)}$ and $\overrightarrow{\arg}_p(r) = \arg(\overrightarrow{u}, \overrightarrow{pr})$.

We suppose that the ordering of the robots in $P$ is such that $\overrightarrow{\arg}_{W(P)}(r_i) < \overrightarrow{\arg}_{W(P)}(r_{i+1})$, for all $i\in[1..n-1]$ (a circular permutation of the ordering can make this true).

For all $i\in[1..n]$, let $r_i' = r_i$ if $r\neq r_i$, otherwise let $r_i'= r'$. We now show that $r_1',\ldots, r_n'$ is an ordering  of the robots in $P'$ around $W(P')$. If $r\neq r_{i}$ and $r\neq r_{i+1}$, we have:
\begin{align*}
    \overrightarrow{\arg}_{W(P')}(r_{i}') &< \overrightarrow{\arg}_{W(P)}(r_{i}) + \theta(P)/2
    & \text{from Lemma~\ref{lem:angle of robots are less that angle of the moving robot}} \\
    &\leq \overrightarrow{\arg}_{W(P)}(r_{i+1}) - \theta(P)/2
    & \text{by definition of $\theta(P)$} \\
    &\leq \overrightarrow{\arg}_{W(P')}(r_{i+1}')
    & \text{from Lemma~\ref{lem:angle of robots are less that angle of the moving robot}} 
\end{align*}
If $r= r_{i+1}$, resp. $r= r_{i}$, then we have $\overrightarrow{\arg}_{W(P)}(r) < \overrightarrow{\arg}_{W(P')}(r')$, resp.  $\overrightarrow{\arg}_{W(P')}(r') < \overrightarrow{\arg}_{W(P)}(r) +\theta(P)$, so that the previous inequality holds.

Overall, $\overrightarrow{\arg}_{W(P')}(r_i') < \overrightarrow{\arg}_{W(P')}(r_{i+1}')$ for all $i\in[1..n-1]$, and the ordering is preserved.

\end{proof}


\begin{theorem}\label{thm:n-robots n-regular, the elected robot is unique}
    Let $P$ be a $n$-robot ($n\geq 5$) biangular configuration that contains a shifted robot. The shifted robot is unique.
\end{theorem}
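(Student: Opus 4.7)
The plan is by contradiction. Assume $P$ has two distinct shifted robots $r_a \neq r_b$, giving biangular configurations $P_a = (P \setminus \{r_a\}) \cup \{r_a^*\}$ and $P_b = (P \setminus \{r_b\}) \cup \{r_b^*\}$, with Weber points (and centers of regularity) $W_a$ and $W_b$ respectively. The three main steps are: control the Weber points via the preceding lemmas; force $W_a = W_b$ using the rigidity of biangular configurations; and derive a contradiction from the gap sequence around the common center.

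First, I apply Lemmas~\ref{lem:property of the Weber point}, \ref{lem:angle of robots are less that angle of the moving robot}, and~\ref{lem:the ordering is invariant by shift} to each swap $P_a \leftrightarrow P$ and $P_b \leftrightarrow P$. Since the shift angles are at most $\theta(P)/2$, these lemmas guarantee: (i) the Weber points $W_a$ and $W_b$ both sit close to $W(P)$; (ii) every robot's angular coordinate changes by less than $\theta(P)/2$ when the reference center is swapped between $W(P)$, $W_a$, and $W_b$; and (iii) the cyclic orderings of the $n - 2 \geq 3$ shared robots $P \setminus \{r_a, r_b\}$ around $W(P)$, $W_a$, and $W_b$ all coincide.

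Next, I use the rigidity of biangular configurations. Such a configuration decomposes into two concentric regular $n/2$-gons around the center of regularity (reducing to a single regular $n$-gon when $n$ is odd), and the center is fully determined by the centroid of either constituent gon, in particular by any three non-collinear points belonging to it. The shared robots sit in the same cyclic order and within angular drift $\theta(P)$ of two biangular patterns; combined with the hypothesis $n \geq 5$, this forces both the centers and the biangular gap sequences to agree. Write $W = W_a = W_b$ and let $p_1, \ldots, p_n$ be the common biangular gap sequence (with $\alpha + \beta = 4\pi/n$) around $W$.

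Finally, let $g_1, \ldots, g_n$ denote the true gaps of $P$ around $W$, and $i_a, i_b$ the cyclic positions of $r_a, r_b$ in $P$. From $P_a$'s viewpoint, $g_k = p_k$ for $k \notin \{i_a - 1, i_a\}$, while $g_{i_a - 1} = p_{i_a - 1} + \delta_a$ with $\delta_a \neq 0$; an analogous statement holds for $P_b$. A short case analysis on the overlap of $\{i_a - 1, i_a\}$ and $\{i_b - 1, i_b\}$ concludes: a two-element overlap forces $i_a = i_b$ and hence $r_a = r_b$, contradicting $r_a \neq r_b$; in the disjoint and one-element-overlap cases, the index $i_a - 1$ is non-defect for $P_b$, so $g_{i_a - 1} = p_{i_a - 1}$ from $P_b$'s view, contradicting $g_{i_a - 1} = p_{i_a - 1} + \delta_a$ with $\delta_a \neq 0$. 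The main obstacle is the rigidity step: ruling out two biangular interpretations with slightly different centers requires combining the quantitative angular drift bound from the earlier lemmas with the combinatorial fact that, for $n \geq 5$, at least three shared robots must lie on a single constituent regular $n/2$-gon, pinning down its centroid and hence the center.
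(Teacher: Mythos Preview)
Your overall architecture mirrors the paper's: assume two shifted robots, use the ordering-invariance lemma, pin down a common center by rigidity, then finish with a gap-sequence contradiction. The gap-sequence endgame is fine once you really have $W_a = W_b$ and a common target pattern $(p_k)$. The problem is the rigidity step, where your argument breaks in two places.

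First, a biangular configuration does \emph{not} decompose into two concentric regular $n/2$-gons. Biangularity is a condition on the \emph{string of angles} around the center; the radial distances $|r_i|_{W}$ are unconstrained. The even-indexed robots are equally spaced \emph{angularly} around $W$, but their centroid has no reason to equal $W$ unless they happen to be equidistant from it. So ``center $=$ centroid of a constituent gon, hence determined by any three of its points'' is false in this setting. The paper instead uses the inscribed-angle locus: if $\ang(r_i, c, r_j)$ is known (it is, from the index difference and the biangular period), then $c$ lies on a fixed circular arc through $r_i$ and $r_j$; two such arcs meet in at most two points, and with three well-chosen shared robots one gets a unique $c$.

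Second, even granting an angular interpretation, your pigeonhole claim that three shared robots lie on one constituent $(n/2)$-gon fails at $n=6$: removing one robot from each of the two triangles leaves exactly $2+2$, not $3$ on either. This is precisely the case the paper isolates and treats separately, using the additional constraint that a shifted robot must shift toward its nearest neighbour; comparing the two putative biangular interpretations then forces $\alpha' < \alpha$ and symmetrically $\alpha < \alpha'$, a contradiction. Your sketch neither handles $n=6$ nor justifies why the two biangular gap sequences $(\alpha,\beta)$ and $(\alpha',\beta')$ must coincide, which your final step silently assumes.
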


\begin{proof}
Let $P$ be a biangular configuration with a shifted robot $r$ (resp., $r'$), associated to the biangular set $P_{bi}=P\setminus\{r\}\cup\{r_{bi}\}$ (resp., associated to the biangular set $P_{bi}'=P\setminus\{r'\}\cup\{r_{bi}'\}$). For the sake of contradiction, we suppose that $r\neq r'$. From Lemma~\ref{lem:the ordering is invariant by shift}, the ordering is unchanged between $P$, $P_{bi}$ and $P_{bi}'$. So, let $r_1$, $\ldots$, $r_n$ be any ordering of robots in $P$ around $W(P)$. An ordering of robots in $P_{bi}$ (resp., $P_{bi}'$) is obtained by replacing $r=r_{i_0}$ by $r_{bi}$ (resp., $r'=r_{i_0'}$ by $r_{bi}'$). 

If $n\geq 5$ and $P$ is equiangular, there are $3$ robots in $P_{bi}\cap P_{bi}'$ ordered in the same way around the center of equiangularity. Their angles are uniquely determined by the difference between their indexes, indeed we have:
\[
    \ang(r_i, W(P), r_j)= \ang(r_i, W(P'), r_j) = \frac{|i-j|2\pi}{n}
\]
Let $arc_{\alpha}(a,b)$ be the set of points $p$ such that $\ang(a,p,c) = \alpha$. $arc_{\alpha}(a,b)$ is a circular arc from $a$ to $b$. Two circular arcs intersect in at most two points, so that there is at most one point $p\notin \{a,b,c\}$ in $arc_{\alpha}(a,b)\cap arc_{\beta}(b,c)$. This implies that that three robots are enough to deduce the position of the unique possible center of equiangularity of $P_{bi}$ and $P_{bi}'$. With a given center, there cannot be two shifted robots. Indeed, if $c(P_{bi}) = c(P_{bi}')$, then the angles formed with $r'$ and the other robots are the same as the robot $r_{bi}'$ and $|r_{bi}'|_{c(P_{bi}')} = |r'|_{c(P_{bi}')}$ \ie, $r' = r_{bi}$.

If $n\geq 8$ and $P$ is biangular, there are $3$ robots in $P_{bi}\cap P_{bi}'$ whose difference between indexes are even, and the previous argument holds.

The last case to consider is when $n=6$ and $P$ is biangular. In this case, if there are $3$ robots in $P_{bi}\cap P_{bi}'$ whose difference between indexes are even, then the previous argument holds. Otherwise, without loss of generality, either $\{r_1,r_2,r_3,r_4\}\in P_{bi}\cap P_{bi}'$ or $\{r_1,r_3,r_4,r_6\}\in P_{bi}\cap P_{bi}'$.

\tolerance=1000
In the first case, the center of biangularity is the unique point of intersection of $arc_{2\pi/3}(r_1, r_3)$ with ${arc_{2\pi/3}(r_2, r_4)}$. Indeed, the intersection is unique due to the ordering of robots.

In the second case, there can be two intersection points between $arc_{2\pi/3}(r_1, r_3)$ and   $arc_{2\pi/3}(r_4, r_6)$. We suppose $r=r_2$ and $r' = r_5$. 
Let $\alpha\leq\beta$, resp. $\alpha'\leq\beta'$, the two angles in the string of angles $SA_{c(P_{bi})}(P_{bi})$, resp.  $SA_{c(P_{bi}')}(P_{bi}')$. 
For simplicity, we can suppose that $\alpha = \arg(r_1, c(P_{bi}), r_{bi})$. 

On the one hand, the robot $r$ must be shifted in the orientation that decreases its angle with respect to the other robots \ie, decreasing its angle with $r_1$. 
On the other hand, $r$ is in $P_{bi}'$ so that $r$ must verify $\arg(r_6, c(P_{bi}'), r) = 2\pi/3$. In particular, this implies that $\arg(r_1, c(P_{bi}'), r) < \arg(r_1, c(P_{bi}), r_{bi})$
so that $\alpha' < \alpha$.
Applying the same argument with $r'$ leads to the contradiction $\alpha<\alpha' < \alpha$.

\end{proof}

\begingroup
\def\thetheorem{\ref{thm:uniqueness of the regular set}}
\begin{theorem}[restated]
    Let $n\geq5$, and $P$ be a $n$-robot configuration that contains a CEB-set with a shifted robot. Then, the shifted robot is unique.
\end{theorem}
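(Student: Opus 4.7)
The plan is to reduce the CEB-set uniqueness question to the already-established biangular uniqueness result (Theorem~\ref{thm:n-robots n-regular, the elected robot is unique}), by treating the CEB-set $Q$ as a biangular configuration in its own right. I split on whether $Q = P$ or $Q \subsetneq P$.

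If $Q = P$, then $P$ itself is equiangular or biangular with a shifted robot. Since equiangularity is a degenerate biangularity (all angles equal), Theorem~\ref{thm:n-robots n-regular, the elected robot is unique} applies directly and yields uniqueness.

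The interesting case is $Q \subsetneq P$. Two features of the CEB construction are essential. First, by Algorithm~\ref{algo:constructCEBSet}, no subset of $Q$ holds $C(P)$, so perturbing a robot of $Q$ by the small shift angle cannot change $c(P)$; in particular $c(P)$ is the same regardless of whether we regard $r$ or a hypothetical alternative $r'$ as the shifted robot. Second, the configuration $Q_{\mathrm{fix}}$ obtained by replacing the shifted robot by its regular position is equiangular or biangular with center $c(P)$, and the center of regularity of such a set coincides with its Weber point, so $W(Q_{\mathrm{fix}}) = c(P)$. These two facts justify applying Theorem~\ref{thm:n-robots n-regular, the elected robot is unique} to the standalone configuration $Q$: if there were two distinct shifted robots $r, r' \in Q$ (they necessarily lie in $Q$ by the CEB definition), then $Q$ alone would be a biangular configuration with two shifted robots, contradicting the earlier theorem as soon as $|Q| \geq 5$.

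The main obstacles I anticipate are twofold. First, the above reduction breaks when $|Q| \in \{1,2,3,4\}$, since Theorem~\ref{thm:n-robots n-regular, the elected robot is unique} requires $n \geq 5$. These small cases must be dispatched by direct geometric arguments: for $|Q| = 3$, the two non-shifted robots together with $c(P)$ already pin down the only possible position for the third, leaving no freedom for a second shifted robot; for $|Q| = 4$, the equiangular (square) and biangular (rhombus) subcases must be separately checked, using the fact that robots outside $Q$ in $P$ are strictly further from $c(P)$ and cannot serve as alternative shifted robots. Second, $W(Q)$ of the shifted subset $Q$ is only approximately $c(P)$, so transferring the biangular argument requires invoking the shift-angle bound (at most $\theta(P)/2$) together with Lemmas~\ref{lem:angle of robots are less that angle of the moving robot} and~\ref{lem:the ordering is invariant by shift} to guarantee that the cyclic ordering of the robots around the Weber point is preserved across the configurations $Q$, $Q_{\mathrm{fix}}$, and the hypothetical $Q'_{\mathrm{fix}}$. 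Once this preservation is in hand, the three-points-determine-the-center argument from the biangular proof carries over verbatim and delivers the contradiction.
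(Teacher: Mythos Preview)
Your split into $Q = P$ versus $Q \subsetneq P$ matches the paper, and for $Q = P$ you both invoke Theorem~\ref{thm:n-robots n-regular, the elected robot is unique} directly.

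For $Q \subsetneq P$, however, you take a much more laborious route than the paper. You try to view $Q$ as a standalone biangular configuration and re-apply Theorem~\ref{thm:n-robots n-regular, the elected robot is unique}, which then forces you to (a) handle $|Q|\in\{1,2,3,4\}$ by separate geometric casework and (b) worry about the discrepancy between $W(Q)$ and $c(P)$ via Lemmas~\ref{lem:angle of robots are less that angle of the moving robot} and~\ref{lem:the ordering is invariant by shift}. The paper sidesteps both complications with a single observation you actually state but then fail to exploit: when $Q \subsetneq P$, the center of the CEB-set is \emph{by construction} $c(P)$, the center of the smallest enclosing circle, and since $Q$ does not hold $C(P)$ this center is the same no matter which robot is declared shifted. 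With a common fixed center, the angles of all robots around $c(P)$ are absolute, and the convention that the shift be in the direction reducing the minimum angle with the neighbours immediately singles out a unique robot. No Weber-point analysis, no ordering-preservation lemma, and no small-$|Q|$ casework is needed.

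Your plan is not incorrect in principle---the reduction for $|Q|\geq 5$ can be made to work, and the small cases are tractable---but you have mis-located the difficulty. The entire purpose of Theorem~\ref{thm:n-robots n-regular, the elected robot is unique} and its supporting lemmas is to cope with the fact that when $Q=P$ the relevant center is the Weber point, which \emph{moves} with the shifted robot; two candidate shifted robots then give two candidate centers, and ruling this out is genuinely delicate. Once $Q\subsetneq P$, that difficulty evaporates because $c(P)$ is pinned down by robots outside $Q$, and the paper's argument collapses to one sentence.
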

\addtocounter{theorem}{-1}
\endgroup

\begin{proof}
    Let $P$ be a $n$-robot configuration that contains a CEB-set $Q$ with a shifted robot.
    If the whole configuration is biangular, Theorem~\ref{thm:n-robots n-regular, the elected robot is unique} implies the result.
    
    Otherwise, the configuration is regular or sym-regular, the center is the center of $SEC(P)$, and, even if different robots can be  considered as shifted (when $|Q|<5$), there can be only one robot that minimizes the angle with the other robots.
\end{proof}

\end{document}